\long\def\commentbegin #1\commentend{}
\renewcommand{\epsilon}{\varepsilon}
\newcommand{\Prob}[1]{\hbox{\rm I\kern-2pt P}\left[#1\right]}
\def\poly{\operatorname{poly}}
\DeclareMathAlphabet{\mathsc}{OT1}{cmr}{m}{sc}
\renewcommand{\geq}{\geqslant}
\renewcommand{\leq}{\leqslant}
\newcommand{\shortOnly}[1]{\ifthenelse{\boolean{short}}{#1}{}}
\newcommand{\onlyShort}[1]{\ifthenelse{\boolean{short}}{}{#1}}
\newcommand{\longOnly}[1]{\ifthenelse{\boolean{short}}{}{#1}}
\newcommand{\onlyLong}[1]{\ifthenelse{\boolean{short}}{}{#1}}
\def\ShowComment{True}
\def\billy#1{{\color{green}\underline{\textsf{Billy:}}} {\color{blue} \emph{#1}}}
\def\gopal#1{{\color{red}\underline{\textsf{Gopal:}}} {\color{blue} \emph{#1}}}
\def\fabien#1{{\color{orange}\underline{\textsf{Fabien:}}} {\color{blue} \emph{#1}}}
\def\billy#1{}
\def\gopal#1{}
\def\fabien#1{}
\def\inline#1:{\par\vskip 7pt\noindent{\bf #1:}\hskip 10pt}
\def\CONGEST{\textsc{CONGEST}}
\def\LOCAL{\textsc{LOCAL}}
\def\TS{Transmission-Schedule}
\def\UPCASTMIN{\textsc{Upcast-Min}}
\def\DOWNCAST{\textsc{Fragment-Broadcast}}
\def\TRANSMITADJACENT{\textsc{Transmit-Adjacent}}
\def\LDTBROADCAST{\textsc{LDT-Broadcast}}
\def\LDTCONSTRUCTAWAKE{\textsc{LDT-Construct-Awake}}
\def\LDTCONSTRUCTROUND{\textsc{LDT-Construct-Round}}
\def\AuxMISone{\textsc{VT-MIS}}
\def\AuxMIStwo{\textsc{LDT-MIS}}
\def\AuxMIStwoROUND{\textsc{LDT-MIS-ROUND}}
\def\AwakeMIS{\textsc{Awake-MIS}}
\def\LDTRANK{\textsc{LDT-Ranking}}
\def\DOWNRECEIVE{\textsf{Down-Receive}}
\def\DOWNSEND{\textsf{Down-Send}}
\def\SIDESENDRECEIVE{\textsf{Side-Send-Receive}}
\def\UPRECEIVE{\textsf{Up-Receive}}
\def\UPSEND{\textsf{Up-Send}}
\renewcommand{\geq}{\geqslant}
\renewcommand{\leq}{\leqslant}
\long\def\hide #1\hideend{}
\newcommand{\squishlist}{
 \begin{list}{$\bullet$}
  { \setlength{\itemsep}{0pt}
     \setlength{\parsep}{3pt}
     \setlength{\topsep}{3pt}
     \setlength{\partopsep}{0pt}
     \setlength{\leftmargin}{1.5em}
     \setlength{\labelwidth}{1em}
     \setlength{\labelsep}{0.5em} } }
\newcommand{\squishlisttwo}{
 \begin{list}{$\bullet$}
  { \setlength{\itemsep}{0pt}
     \setlength{\parsep}{0pt}
    \setlength{\topsep}{0pt}
    \setlength{\partopsep}{0pt}
    \setlength{\leftmargin}{2em}
    \setlength{\labelwidth}{1.5em}
    \setlength{\labelsep}{0.5em} } }
\newcommand{\squishend}{
  \end{list}  }
\title{Distributed MIS in \texorpdfstring{$O(\log\log{n} )$}{O(log log n)} Awake Complexity}
\author{Fabien Dufoulon\footnote{Part of the work was done while Fabien Dufoulon was a postdoctoral fellow at the University of Houston in Houston, USA.
}}{School of Computing and Communications, Lancaster University, Lancaster, UK}{fabien.dufoulon.cs@gmail.com}{https://orcid.org/0000-0003-2977-4109}{Supported in part by National Science Foundation (NSF) grants CCF-1540512, IIS-1633720, and CCF-1717075 and U.S.-Israel Binational Science Foundation (BSF) grant 2016419.}
\author{William K. Moses Jr.\footnote{Part of the work was done while William K. Moses Jr.\ was a postdoctoral fellow at the University of Houston in Houston, USA.
}}{Department of Computer Science, Durham University, Durham, UK}{wkmjr3@gmail.com}{https://orcid.org/0000-0002-4533-7593}{Supported in part by NSF grants CCF-1540512, IIS-1633720, and CCF-1717075 and BSF grant 2016419.}
\author{Gopal Pandurangan}{Department of Computer Science, University of Houston, Houston,  TX, USA}{gopalpandurangan@gmail.com}{https://orcid.org/0000-0001-5833-6592}{Supported in part by NSF grants CCF-1540512, IIS-1633720, and CCF-1717075 and BSF grant 2016419.}
\authorrunning{F. Dufoulon, W.\,K. Moses Jr., and G. Pandurangan} 
\keywords{Maximal Independent Set, Sleeping model, Energy-efficient, Awake complexity, Round complexity, Trade-offs}
\begin{document}

\maketitle

\begin{abstract}
Maximal Independent Set (MIS) is one of the fundamental and most well-studied problems  in distributed graph algorithms. 
Even after four decades of intensive research,  the best known (randomized) MIS algorithms have $O(\log{n})$ round complexity on general graphs [Luby, STOC 1986] (where $n$ is the number of nodes), while  the best known lower bound is $\Omega(\sqrt{\log{n}/\log\log{n}})$ [Kuhn, Moscibroda,  Wattenhofer, JACM 2016]. Breaking past the $O(\log{n})$ round complexity upper bound  or showing stronger lower bounds  have been 
longstanding open problems. 

Energy is a premium resource in various settings such as  battery-powered wireless networks and sensor networks. The bulk of the energy is used by nodes when they are \emph{awake}, i.e., when they are sending, receiving, and even just listening for messages. On the other hand, when a node is \emph{sleeping}, it does not perform any  communication and thus spends very little energy.  Several recent works  have addressed the problem of designing \emph{energy-efficient} distributed algorithms for various fundamental problems. These algorithms operate by  minimizing
the   number of rounds in which {\em any} node is  \emph{awake},  also called the (worst-case) \emph{awake complexity}.  
An intriguing open question is whether one can design
 a distributed MIS algorithm that has significantly  smaller awake complexity compared to existing algorithms. In particular, the question of obtaining a distributed MIS algorithm with $o(\log n)$ 
 awake complexity was  left open in [Chatterjee, Gmyr, Pandurangan, PODC 2020].

Our main contribution is to show that MIS can be computed in  awake complexity that is  \emph{exponentially} better compared to the best known  round complexity of $O(\log n)$ 
 and also bypassing its fundamental $\Omega(\sqrt{\log{n}/\log\log{n}})$  round complexity lower bound  exponentially. 
Specifically, we show that MIS can be computed by a randomized distributed (Monte Carlo) algorithm  in $O(\log\log{n} )$ awake  complexity  with high probability (i.e., with probability at least $1 - n^{-1}$). This algorithm has a round complexity of $O((\log^7 n) \log \log n)$. 
We also show that we can improve the  round complexity at the cost of a slight increase in awake complexity, by presenting a randomized distributed (Monte Carlo) algorithm for MIS that, with high probability, computes an MIS in $O((\log\log{n})\log^*n)$ awake complexity and $O((\log^3 n) (\log \log n) \log^*n)$ round complexity. Our algorithms work in the $\CONGEST$ model where  messages of size $O(\log n)$ bits can be sent per edge per round.
\end{abstract}

\section{Introduction}\label{sec:introduction}

\subsection{Maximal Independent Set Problem}

Computing the \emph{maximal independent set} (\emph{MIS})  is one of the fundamental and most well-studied problems in distributed graph algorithms.  Given a graph with $n$ nodes, each node must (irrevocably) commit to being in a subset $M \subseteq V$ (called the MIS) or not such that (i) every node is either in $M$ or has a neighbor in $M$ and (ii) no two nodes in $M$ are adjacent to each other.

Because of the importance of MIS, distributed algorithms for MIS  have been studied extensively for
the last four decades mainly with a focus on improving the  time complexity
(i.e.,\ the number of rounds).
In 1986, Alon, Babai, and Itai \cite{Alon_1986} and Luby \cite{Luby_1986} presented a randomized distributed MIS algorithm  that takes  $O(\log{n})$  rounds ($n$ is the number of nodes in the graph) with high probability. Since these seminal results, there has been a lot of significant progress in recent years in designing progressively faster distributed MIS algorithms. For $n$-node graphs with maximum degree $\Delta$, Ghaffari \cite{Ghaffari_2016_SODA} presented a randomized MIS algorithm running in
    $O(\log \Delta) + 2^{O(\sqrt{\log\log n})}$ rounds,
improving over the algorithm of Barenboim, Elkin, Pettie and Schneider \cite{Barenboim_2016} that runs in
    $O(\log^2 \Delta) + 2^{O(\sqrt{\log\log n})}$ rounds. We note that these two results assume the $\LOCAL$ model. 
The run time was further improved by Rozhon and Ghaffari \cite[Corollary $3.2$]{Rozhon2020} to
    $O(\log{\Delta}  + \poly(\log{\log{n}}))$ rounds, which is currently the best known bound
    for randomized algorithms  in the $\LOCAL$ model.
The currently
best known randomized algorithm in the $\CONGEST$ model takes
$O(\log{\Delta}\log\log{n} + \poly(\log{\log{n}}))$ rounds~\cite{GGR20}.
Thus, the currently known
best algorithms of MIS (\cite{Rozhon2020, GGR20, Ghaffari_2016_SODA}) are dependent on
$\Delta$ (the maximum degree), and hence still take $O(\log n)$ rounds (even in the $\LOCAL$ model) for graphs with $O(\poly(n))$
degree. As far as deterministic algorithms are concerned, the best known algorithms take $O(\poly(\log n))$ rounds  in the $\LOCAL$ as well as $\CONGEST$ models \cite{Rozhon2020,GGR20}. 

There are faster distributed algorithms known for special classes of
graphs such as   trees~\cite{Ghaffari_2016_SODA,lenzen} and Erdos-Renyi random graphs~\cite{tcsgnp,Ghaffari_2016_SODA}, but they
still take $\Omega(\sqrt{\log{n}/\log\log{n}})$  rounds.
There are also MIS
algorithms that run faster on graphs with low arboricity, but they nevertheless
take $O(\log n)$ rounds on high arboricity graphs~\cite{Ghaffari_2016_SODA,barenboim}.

\begin{sloppypar}
   
While the above results  make significant progress in the round complexity of the MIS problem for some specific graphs, however,  in general graphs, the best known running time is still $O(\log{n})$ (even for randomized algorithms and even in the $\LOCAL$ model). Furthermore, there is a fundamental lower bound of
    $\Omega\left(\min\left\{\frac{\log \Delta}{\log \log \Delta}, \sqrt{\frac{\log n}{\log \log n}}\right\}\right)$ rounds
due to Kuhn, Moscibroda, and Wattenhofer~\cite{Kuhn_2016} that also applies to randomized algorithms and holds even in the $\LOCAL$ model. Thus, for example, say,  
when $\Delta = 2^{\Omega(\sqrt{\log n \log \log n})}$, 
it follows that one cannot hope for algorithms faster than $\sqrt{\log{n}/\log\log{n}}$ rounds. 
Balliu, Brandt, Hirvonen, Olivetti,  Rabie, and 
Suomela \cite{Balliu_2019}  showed  that one cannot hope for algorithms that run within
    $o(\Delta) + O(\log^*n)$ rounds
for the regimes where
    $\Delta \ll \log \log n$ (for randomized algorithms) \cite[Corollary $5$]{Balliu_2019}
and
    $\Delta \ll \log n$ (for deterministic algorithms) \cite[Corollary $6$]{Balliu_2019}.
(See also results on an improved lower bound \cite{MIS-trees-lower}.)

\end{sloppypar}

\subsection{Awake Complexity}

Energy is a premium resource in various settings such as  battery-powered wireless networks and sensor networks.
The bulk of the energy is used by the nodes (devices)  when they are {\em ``awake''}, i.e., when  they are  sending, receiving,  and even just listening for messages.
It is well-known that the energy used  by a node when it is idle and just listening (waiting to hear from a neighbor) is only slightly smaller than the energy used  in a transmitting or receiving state \cite{Zheng_2005, Feeney_2001}. 
On the other hand, the energy used in the ``sleeping'' state, i.e., when a node  has switched off its communication devices and is not sending, receiving or listening, is \emph{significantly less} than in the transmitting/receiving/idle (listening) state~\cite{Zheng_2005, Feeney_2001, King_2011, Wang_2006, Yang_2013}. A node may choose to enter and exit this sleeping mode in a judicious way to save energy during the course of an algorithm.\footnote{This has been exploited by  protocols to save power in ad hoc wireless networks by  switching between two states --- \emph{sleeping} and \emph{awake} --- as needed (the MAC layer provides support for switching between these states~\cite{Zheng_2005, Yang_2013, Murthy_2004_Book}).}

There has been a lot of recent theoretical interest in designing energy-efficient distributed algorithms for various fundamental problems such as maximal independent set, maximal matching, coloring, broadcasting, spanning tree construction, breadth-first tree construction, etc.\ (see e.g., \cite{energy1,podc2020,ghaffari-sleeping,BM21,energy2,energy3,energy4,AMP22}). 
These algorithms operate by  minimizing
the   number of rounds in which any node is  {\em awake},  also called the {\em awake complexity}.  
 An intriguing question is whether one can design
distributed algorithms for various problems that have significantly  smaller awake complexity compared to existing algorithms. However, this is challenging,  since a node can only communicate with a neighboring node that is awake (note  that a sleeping node  does not  send or receive messages and also messages sent to it are lost). As a result, coordinating (or scheduling) such communication in an efficient manner (without keeping any node awake for a long time)  becomes non-trivial.  

\subsection{Model and Complexity Measures}
\label{sec:model}

\noindent {\bf Distributed Computing Model.}
 We are given an anonymous distributed network of $n$ nodes, modeled as an undirected graph $G = (V,E)$. Each node hosts a processor with limited initial knowledge. 
 We assume that each node has ports (each port having a unique port number); each incident edge is connected to one distinct port. We assume that each node knows a common value $N$, a polynomial upper bound on $n$.

Nodes are allowed to communicate through the edges of the graph $G$ and it is assumed that communication is {\em synchronous} and occurs in rounds. In particular, we assume that each node knows the current round number (starting from round 0). In each round, each node can perform some local computation (which finishes in the same round) including accessing a private source of randomness, and can exchange messages of size $O(\log n)$ bits with each of its neighboring nodes. 

\begin{sloppypar}
This standard model of distributed computation is called the $\CONGEST$ model~\cite{Peleg_2000_Book}. We note that our algorithms also, obviously, apply to the $\LOCAL$ model, another standard model~\cite{Peleg_2000_Book} where there is no restriction on the size of the messages sent per edge per round. 
Though the  $\CONGEST$ and $\LOCAL$  models do not put any constraint on the computational power of the nodes,  our algorithms perform only light-weight computations  (each node processes only $\poly(\log n)$ bits per round and takes computation time essentially linear in the number of bits processed). 
\end{sloppypar}

 \smallskip
 
\noindent {\bf Sleeping Model.}  We assume the  sleeping model~\cite{podc2020}, where 
 a node can be in
 either of the two states --- sleeping or awake.  
 (At the beginning, we assume that all nodes are awake.)  This is a simple generalization of the standard distributed computing model, where nodes are always assumed 
 to be awake.
 In the sleeping model, each node decides to be either \textit{awake} or \textit{asleep} in each round (till it terminates), corresponding to whether the node can receive/send messages and perform computations in that round or not, respectively. That is, any node $v$ can decide to {\em sleep} starting at any (specified) round of its choice. We assume that all nodes know the correct round number whenever they are awake. A node can {\em wake up} again later at any {\em specified} round and enter the {\em awake} state.
 We note that the model allows a node to cycle through the process of sleeping  in some round and waking up at a later round as many times as it wants. To summarize, distributed computation in the sleeping model proceeds 
 in \emph{synchronous} rounds and each round consists of the following steps: (1) Each awake node can perform local computation. (2) Each awake node can send a message to its adjacent nodes.
    (3) Each awake node can receive messages sent to it in this round (in the previous step) by other awake nodes.

  In the sleeping model, let $A_v$ denote the number of awake rounds  for a node
    $v$ before it terminates (i.e., finishes the execution of the algorithm, locally). We define the \emph{(worst-case) awake complexity}
     as $\max_{v \in V}A_v$. For a randomized algorithm, $A_v$ will be a random variable
     and our goal is to obtain high probability bounds on the awake complexity.
     While the main goal is to reduce awake complexity, we also strive to minimize
     the  \emph{round complexity}, where both, sleeping and awake rounds, are counted.

     This paper assumes the sleeping model in the  $\CONGEST$ setting, which we call the SLEEPING-CONGEST model.
     
     Several recent works (see Section \ref{sec:related}) have designed    distributed algorithms in the SLEEPING-CONGEST  model for fundamental problems such as MIS, approximate  matching and vertex cover, spanning tree, minimum spanning tree, coloring, and other problems \cite{podc2020,ghaffari-sleeping,BM21,AMP22,ghaffari-podc2023}.

\subsection{Our Contributions}
\label{sec:results}

In light of the difficulty in breaking the $o(\log{n})$ round  barrier of MIS  and the lower bound of $\Omega(\sqrt{\log{n}/\log\log{n}})$ rounds in the standard model (that applies even for $\LOCAL$ algorithms), as well as motivated by resource considerations discussed above, a fundamental question that we address in this paper is:

\begin{center}
    \fbox
    {
        \begin{minipage}{20.5em}
            \emph{Can we design a distributed MIS algorithm with $o(\log n)$ awake complexity?}
        \end{minipage}
    }
\end{center}

We answer the above question in the affirmative and actually show a much stronger bound.
Our main contribution is that we show that MIS can be  computed in (worst-case) awake complexity of $O(\log \log n)$ rounds,  bypassing the $\Omega(\sqrt{\log{n}/\log\log{n}})$   lower bound on the round complexity in an exponentially better fashion.
Specifically, we present the following  results.
\begin{enumerate}
\item  We present a randomized distributed (Monte Carlo) algorithm for MIS that with high probability  computes an MIS and has $O(\log\log{n})$ awake complexity. This is the first  distributed MIS algorithm with $O(\log \log n)$ awake complexity. 
This algorithm has {\em round complexity} $O(\log^7 n \log \log n)$.
 Our bounds hold even in the $\CONGEST$ model where messages of $O(\log n)$ bits can be sent per edge per round.
\item We  show that we can  reduce the round complexity  at the cost of a slight increase in awake complexity by presenting a randomized MIS algorithm  with $O((\log \log n) \log^* n )$ awake complexity and  $O((\log^3 n) (\log \log n) \log^* n)$ round complexity in the $\CONGEST$ model.
\end{enumerate}

Our work answers a key question left open in \cite{podc2020}, namely whether one can design MIS algorithms with
(even) $o(\log n)$
(worst-case) awake complexity. We note that prior results \cite{podc2020,ghaffari-sleeping} presented algorithms in the sleeping
model with $O(\log n)$ awake complexity (see Section \ref{sec:related}).
Our results show that we can compute an MIS in an awake complexity that is \emph{exponentially} better compared to the best known  round complexity
of $O(\log n)$. 
Since a node spends a significant amount of energy  only in its awake rounds, our algorithms are highly energy-efficient compared to the existing algorithms.

\section{Related Work and Comparison}
\label{sec:related}

\noindent \textbf{Prior Work in the Sleeping Model for MIS.}
Chatterjee, Gmyr, and Pandurangan~\cite{podc2020} posit the sleeping model and showed that MIS in general graphs can be
solved in $O(1)$ \emph{node-averaged} awake complexity. Node-averaged
awake complexity is measured by the \emph{average} number of rounds a node is
awake. They also defined \emph{worst-case} awake complexity (used in this paper) which is the
worst-case number of rounds a node is awake until it finishes the algorithm. The
worst-case awake complexity of their MIS algorithm is $O(\log n)$, while the
worst-case complexity (that includes all rounds, sleeping and awake) is
$O(\log^{3.41}n)$ rounds. An important  question left open in \cite{podc2020} is whether one can design an MIS algorithm 
with  $o(\log n)$ worst-case awake complexity (even in the $\LOCAL$ model).\footnote{In
this paper, we do not focus on the node-averaged awake complexity measure, and only focus
on the (worst-case) awake complexity. However, it is fairly straightforward to augment the algorithms of this paper so that they  also give an $O(1)$-node averaged complexity in addition to their (worst-case) awake complexity guarantees by using the approach of \cite{podc2020}.} 

Ghaffari and Portmann~\cite{ghaffari-sleeping} subsequently improved the round complexity bound of Chatterjee, Gmyr, and Pandurangan~\cite{podc2020}. They present  a randomized MIS algorithm
that has worst-case awake complexity of $O(\log n)$, round complexity of $O(\log n)$, while having $O(1)$ node-averaged awake complexity (all
bounds hold with high probability). They also present algorithms for $(1+\epsilon)$ approximation of maximum matching and $(2+\epsilon)$ approximation of
minimum vertex cover with the same bounds on round complexity
and node-averaged awake complexity.
Hourani, Pandurangan, and Robinson \cite{hourani-mis}  presented
randomized MIS algorithms that have $O(\poly(\log \log n))$ awake complexity for certain special classes of {\em random graphs},
including random geometric graphs (of arbitrary dimension). These algorithms work only in the $\LOCAL$ model as
linear (in $n$) sized messages need to be sent per edge per round. This result is subsumed by the results of the current paper.

Subsequent to the publication of the arxiv version of our paper \cite{DMP22}, Ghaffari and Portmann \cite{ghaffari-podc2023}
presented a distributed  MIS algorithm with the same worst-case awake complexity of $O(\log \log n)$, but a better round complexity of round complexity $O(\log^2 n)$. 
They also present a second algorithm that  has worse  awake complexity of $O((\log \log n)^2)$ but with  a better round complexity of $O(\log n \log \log n \log^*n)$.

~\\\noindent \textbf{Other Works in the Sleeping Model.}
Barenboim and Maimon~\cite{BM21} showed that many problems, including broadcast, construction of a spanning tree,
 and leader election can be solved deterministically in $O(\log n)$ awake complexity in the sleeping model.
 They construct a spanning tree called Distributed Layered Tree (DLT) in $O(\log n)$ awake complexity deterministically. In this tree, broadcast and convergecast can be accomplished in $O(1)$ awake rounds.
 They also showed
 that fundamental symmetry breaking problems such as MIS and ($\Delta+1$)-coloring can be solved deterministically
 in $O(\log \Delta + \log^*n)$ awake rounds in the $\LOCAL$ model, where $\Delta$ is the maximum degree. 
 (Note that, in general,  this can take $O(\log n)$ awake rounds when $\Delta = \Omega(\poly n)$.) 
 Their algorithm only works in the $\LOCAL$ model (as opposed to the $\CONGEST$ model), as it needs large-sized (polynomial number of bits) messages to be sent over an edge.
 They also define  the class of {\em O-LOCAL} problems (that includes MIS and coloring), where such a problem is one that can be solved sequentially according to some acyclic orientation of the edges of the input graph where the decision of a node depends on the decisions of the nodes in the subtree rooted at it. 
 They 
showed that problems in this class admit  a deterministic algorithm that runs in $O(\log \Delta)$ awake time and $O(\Delta^2)$ round complexity.
 Maimon~\cite{maimon} presents trade-offs between awake and round complexity for O-LOCAL problems.

Augustine, Moses Jr., and Pandurangan~\cite{AMP22} give an $O(\log n)$ awake complexity algorithm for the minimum spanning tree (MST) problem (in the CONGEST model). They use a spanning tree construction called the Labelled Distance Tree (LDT) which we also use in our algorithm.

~\\ \noindent \textbf{Other Prior Works on Distributed Energy-Efficient Algorithms.}

There has been significant research on energy-efficient distributed algorithms over the years --- more than we can survey here --- and we restrict ourselves to those that are
most relevant.  

A first, closely linked line of work is that of Chang, Kopelowitz, Pettie, Wang, and Zhan~\cite{energy1} on radio networks (inspired by earlier work on energy efficient algorithms in radio networks e.g., \cite{nakano,jurdin,pajak}).
Unlike our paper (and the other papers mentioned in the previous paragraph) that use the \emph{SLEEPING-CONGEST} model, this line of work assumes a model with additional communication restrictions that reflect the behavior of radio networks; we refer to it as the \emph{SLEEPING-RADIO} model. More concretely, in this model, nodes can only broadcast messages; hence the same message is sent to all neighbors. Additionally, collisions can occur at a listening node if two neighboring nodes transmit simultaneously in the same round. There are a few variants depending on how collisions are handled.) In this model, they study the {\em energy complexity} measure, which is defined identically to (worst-case) awake complexity (i.e., both measures count only the rounds that a node is awake. 

Energy-efficient algorithms for several
problems  such as broadcast, leader election, breadth-first search, maximal matching, diameter and minimum-cut computation have been studied in the SLEEPING-RADIO model~\cite{energy1,energy2,energy3,energy4,energy5, CS22, CDJ23,Chang23}. An interesting open problem
is whether our sub-logarithmic bounds on MIS awake complexity in the SLEEPING-CONGEST can be obtained 
in the SLEEPING-RADIO model.

King, Phillips, Saia, and Young~\cite{King_2011} also study a similar model  where nodes can be in
two states: sleeping or awake (listening and/or sending). They present an
energy-efficient algorithm in this model to solve a reliable broadcast problem.  We also refer to  the literature 
on resource competitive algorithms where there is limited energy available both to the algorithm and the
(jamming) adversary (e.g., \cite{young1,young2,young3}).

Finally, Fraigniaud, Montealegre, Rapaport and Todinca \cite{FMRT23} consider energy-efficiency from a slightly different angle. They do not assume nodes have a sleeping capability, but instead seek to minimize the maximum, taken over all nodes (or all edges), of the number of rounds in which these nodes (or edges) are active in sending messages (or transmitting messages). These two complexity measures are called respectively node-activation and edge-activation complexity, and the authors show that every Turing-computable problem can be solved with $O(1)$ node- and edge-activation complexity in the CONGEST setting, at the cost of possibly exponential in $n$ round complexity. In the LOCAL setting, the round complexity can be reduced to polynomial in $n$.  
\section{High-level Overview and Techniques}
The best known distributed MIS algorithms (\cite{Rozhon2020, GGR20, Ghaffari_2016_SODA}) in the traditional setting suffer from a $\log \Delta$ dependency in the round complexity, where $\Delta$ is the maximum degree
(see Section \ref{sec:introduction}). Prior to this work, that was also the case in the sleeping model as well.\footnote{In particular,
the algorithms of \cite{podc2020, ghaffari-sleeping} which had  optimal $O(1)$ rounds {\em node-averaged} awake complexity, however, 
had $O(\log n)$ (worst-case) awake complexity.}
Rather than improve these algorithms to remove this dependency (which appears very difficult), we improve a different algorithm: the well-known {\em randomized greedy MIS} algorithm \cite{Coppersmith_1989,Blelloch_2012, Fischer_2018}, a variant of Luby's algorithm \cite{Luby_1986}.

The (sequential) randomized greedy MIS algorithm considers nodes (of some graph $G=(V,E)$) in random order and adds them to the output set unless one of their neighbors is already in it. If $v_1,\ldots,v_n$ is the random node ordering considered by the algorithm, then it is well-known that the output is the lexicographically first MIS (LFMIS) \cite{Coppersmith_1989} with respect to that (random) ordering.\footnote{Given two (MIS) subsets $X \neq Y$ of $V$, such that $X \nsubseteq Y$ and $Y \nsubseteq X$, $X$ is lexicographically smaller (with respect to that ordering) than $Y$ if and only if the smallest differing element between $X$ and $Y$ is in $X$.} 
Fischer and Noever \cite{Fischer_2018} showed that
the randomized greedy MIS can be implemented in the (traditional) distributed computing model in $O(\log n)$ rounds with high probability and also that this bound is tight.
On the other hand, we show that randomized greedy MIS can be implemented in $O(\log \log n)$ awake complexity. We build to this result in three steps. 
\smallskip

\noindent {\bf Algorithm \AuxMISone{}.} First, we give a simple awake-efficient variant (Algorithm \AuxMISone{} in Subsection \ref{subsec:auxMIS}) of the naive distributed implementation of the above (sequential) algorithm. Let $I$ be an upper bound on the randomly chosen IDs. Then, the naive distributed implementation works as follows. In each round $i \in [1,I]$, all nodes transmit whether they have joined the MIS or not to their neighbors. After which, the node with ID $i$ (if it exists) enters the MIS if none of its neighbors already have. Clearly, the naive implementation has an excessive $O(I)$ awake complexity. However, we can reduce the awake complexity exponentially, that is, to $O(\log I)$. 

To reduce the awake complexity, we use a \emph{virtual binary tree structure} (see Subsection \ref{subsec:virtualBinaryTree}), similar to that of \cite{BM21}, to carefully coordinate the communication between the nodes. More precisely, a (virtual) tree with $t$ leaves (where the same tree is locally determined by each node using the parameter $t = 2^{\lceil \log I \rceil}$) tells each node in which round to be awake and communicate to its neighbors whether it is in the MIS or not. This virtual tree ensures that for any two neighboring nodes $v,v'$ with $id_v <  id_{v'}$, $v$ and $v'$ are both awake in some round $i$ that satisfies $id_{v} < i \leq id_{v'}$. As a result, a node needs to be awake in only $O(\log I)$ well-chosen rounds (where $\log t = O(\log I)$ is the depth of the virtual tree) to correctly implement randomized greedy MIS. 
This virtual tree coordination framework is reused in our third algorithm, \AwakeMIS{}, and we believe it can be useful in general for designing awake-efficient algorithms.

\smallskip

\noindent {\bf Algorithm \AuxMIStwo{}.} Second, we give a more awake-efficient variant (Algorithm \AuxMIStwo{} in Subsection \ref{subsec:auxMIS}) with an improved dependency on the ID upper bound $I$. This improved dependency comes into play when $I$ is super-polynomial (or even worse) in the number of nodes $n$. Having good awake complexity in this particular scenario, which happens in \AwakeMIS{}, is crucial for our $O(\log \log n)$ awake complexity main result.

To obtain an improved dependency on $I$, we use a spanning tree structure, called a \emph{labeled distance tree} (LDT), introduced in \cite{AMP22} (an improvement on a similar structure, introduced previously in \cite{BM21}). Crucially, these trees can be used to broadcast and rank nodes (i.e, assign IDs in $[1,n]$) in $O(1)$ awake complexity, while the LDT itself can be constructed in $O(\log n)$ awake complexity deterministically \cite{AMP22}. Hence, one can first build a LDT and rank the nodes in $O(\log n)$ awake complexity. Since nodes are ranked arbitrarily, we can then have the root broadcast a uniformly random permutation of $[1,n]$ in $O((n \log n)/\log I)$ consecutive broadcasts (recall that messages can be of size $O(\log I) = O(\log n)$ bits, and thus can be sent in $\CONGEST$). 
Consequently, nodes obtain new IDs in $[1,n]$ that correspond to a uniformly random ordering. It remains only to run Algorithm \AuxMISone{}, which now takes $O(\log n)$ awake complexity (due to the smaller IDs).
\smallskip

\noindent {\bf Algorithm \AwakeMIS{}.} For our main result, we use the previously described techniques as well as the following two key properties of the \emph{randomized greedy MIS algorithm}. 
The first is the \emph{composability property}. One can run the randomized greedy MIS algorithm on the first $k > 0$ nodes, then run that algorithm again but on the remaining nodes, that is, those which are not neighbors of the first computed MIS. The union of the two computed MIS's is the output of the randomized greedy MIS algorithm on $G$.
The second is the \emph{residual sparsity property} --- stated formally in Lemma \ref{lem:sparsityLemmaForRandomOrder} in Subsection \ref{subsec:randomizedGreedyMISProperty} --- which shows that after processing the first $k$ nodes in the random ordering, the degree of the residual graph (i.e., the subgraph induced by the rest of the nodes minus the neighbors of MIS nodes among the first $k$ nodes) is reduced (essentially) to $O(n/k)$ with high probability. 

In Algorithm \AwakeMIS{} (described in Section \ref{sec:MIS}), nodes are partitioned into $O(\log^2 n)$ batches. More precisely, each node picks a pair $(i,j) \in [1,\ell] \times [1,2\Delta']$ with some well-chosen probabilities, where $\ell = O(\log n)$ and $\Delta'= O(\log n)$ are two parameters. To compute the MIS, we consider batches sequentially in phases (according to the lexicographical ordering). During the first ``communication'' round of each phase, nodes inform their neighbors whether they are already in the MIS or not. Moreover, just as in \AuxMISone{}, nodes use a virtual binary tree structure to coordinate in which of these rounds to be awake or to sleep. (Hence, any given node is awake for at most $O(\log \log n)$ communication rounds.) For the remaining rounds of some phase $(i,j)$, nodes of batch $(i,j)$ with no neighbors already in the MIS run Algorithm \AuxMIStwo{} to compute an MIS over the batch's nodes. Crucially, we show that the subgraph induced by the nodes running Algorithm \AuxMIStwo{} is \emph{shattered}: that is, it consists only of $O(\log n)$-sized components with high probability. Hence, nodes run \AuxMIStwo{} using $O(\log \log n)$ awake complexity only. (Here, it is important that the second term of \AuxMIStwo{}, caused by the $\CONGEST$ bandwidth, adds up to $O(\log \log n)$.)  From this, it is easy to see that Algorithm \AwakeMIS{} has $O(\log \log n)$ awake complexity.

However, how do we ensure that the subgraph induced by the nodes running Algorithm \AuxMIStwo{} is shattered? First, we adjust the probabilities that nodes choose a given (batch) pair to ensure that the first $2 \Delta'$ batches contain with high probability half as many nodes as the next $2 \Delta'$, and so on. Hence, by the residual sparsity property, the subgraph induced by the nodes (with no MIS neighbors) within any of these collections of $2 \Delta'$ batches has small $O(\log n)$ degree. (In fact, we must first use the composability property to show that the MIS computed throughout all previous phases is the output of randomized greedy MIS on the nodes in all previous batches.) Furthermore, nodes within any such collection independently and uniformly chose any of the $2 \Delta'$ batches. Hence, for each node, the expected number of neighbors (themselves with no MIS neighbors) is less than $1/2$. In this case, a simple branching process argument --- stated formally in Lemma \ref{lem:branchingProcess} in Subsection \ref{subsec:graphShattering} --- shows that the subgraph induced by a given batch's nodes (with no MIS neighbors) is shattered.

\section{Preliminaries: Notation and Randomized Techniques}\label{sec:randomizedprelim}
\subsection{Notation}
For any subset $V' \subseteq V$, let $G[V']$ denote the subgraph of $G$ induced by $V'$.
For any node $v$, let $N(v)$ denote the union of $v$ and the set of its neighbors. Similarly, for any set of vertices $V' \subseteq V$, let $N(V')$ denote the union of $V'$ and the set of neighbors of any node in $V'$. 
For any two integers $i$ and $j$, $[i,j]$ is used to denote the set $\{i,\ldots,j\}$. 

\subsection{Chernoff Bounds}
The following Chernoff bounds~\cite{Upfalbook} are used in the later sections, where the second bound is obtained by applying the inequality $\ln(1+\delta) \geq (2\delta)/(2+\delta)$ to Theorem 4.4 (Inequality 1) in~\cite{Upfalbook}.

\begin{lemma}
\label{lem:chernoffBound}
Let $X_1,\ldots,X_k$ be independent Bernoulli random variables with parameter $p$. Then, 
\begin{itemize}
    \item For any $0 \leq \delta \leq 1$, $\Pr[\sum_{i=1}^k X_i \leq (1-\delta) pk] \leq e^{-\frac{\delta^2 kp}{2}}$,
    \item For any $\delta \geq 0$, $\Pr[\sum_{i=1}^k X_i \geq (1+\delta) pk] \leq e^{-\frac{\delta^2 kp}{2+\delta}}$.
\end{itemize}
\end{lemma}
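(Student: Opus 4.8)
The plan is to obtain both tail bounds via the standard Chernoff method --- applying Markov's inequality to the exponential moment generating function of $S = \sum_{i=1}^{k} X_i$ --- and then to simplify the resulting expressions using the two elementary logarithmic inequalities that accompany the statement. Throughout, write $\mu = \mathbb{E}[S] = pk$.

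For the upper tail, I would first bound, for any $t > 0$, $\Pr[S \geq (1+\delta)\mu] = \Pr[e^{tS} \geq e^{t(1+\delta)\mu}] \leq e^{-t(1+\delta)\mu}\,\mathbb{E}[e^{tS}]$ by Markov's inequality. Since the $X_i$ are independent, $\mathbb{E}[e^{tS}] = \prod_{i=1}^{k} \mathbb{E}[e^{tX_i}] = (1 + p(e^t-1))^k \leq e^{pk(e^t - 1)}$, using $1 + x \leq e^x$. Choosing $t = \ln(1+\delta)$ to minimize the bound yields the classical inequality $\Pr[S \geq (1+\delta)\mu] \leq \bigl(e^{\delta}/(1+\delta)^{1+\delta}\bigr)^{\mu}$, which is exactly Theorem 4.4 (Inequality 1) of \cite{Upfalbook}. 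To reach the stated cleaner form, I would take logarithms and apply $\ln(1+\delta) \geq 2\delta/(2+\delta)$: this gives $(1+\delta)\ln(1+\delta) \geq 2\delta(1+\delta)/(2+\delta)$, so the exponent $\mu\bigl(\delta - (1+\delta)\ln(1+\delta)\bigr)$ is at most $\mu\bigl(\delta - 2\delta(1+\delta)/(2+\delta)\bigr) = -\mu\delta^2/(2+\delta)$, which is the second bullet.

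For the lower tail I would proceed symmetrically, applying Markov to $e^{-tS}$ for $t > 0$: $\Pr[S \leq (1-\delta)\mu] \leq e^{t(1-\delta)\mu}\,\mathbb{E}[e^{-tS}] \leq e^{t(1-\delta)\mu}\,e^{pk(e^{-t}-1)}$. Optimizing at $t = -\ln(1-\delta)$ produces $\Pr[S \leq (1-\delta)\mu] \leq \bigl(e^{-\delta}/(1-\delta)^{1-\delta}\bigr)^{\mu}$, and then the standard inequality $(1-\delta)\ln(1-\delta) \geq -\delta + \delta^2/2$, valid for $0 \leq \delta \leq 1$, bounds the exponent by $-\mu\delta^2/2$, giving the first bullet.

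I expect no genuine obstacle, as the result is classical and the statement already cites \cite{Upfalbook}. The only nontrivial ingredients are the two scalar inequalities $\ln(1+\delta) \geq 2\delta/(2+\delta)$ (for $\delta \geq 0$) and $(1-\delta)\ln(1-\delta) \geq -\delta + \delta^2/2$ (for $0 \leq \delta \leq 1$), each of which follows from a short single-variable argument (comparing both sides at $\delta = 0$ and checking the sign of the derivative of their difference, or equivalently a Taylor estimate). Given these, both bounds drop out mechanically; indeed, since the statement cites the textbook directly, the intended ``proof'' is essentially just the derivation of the second bullet from Inequality 1 of Theorem 4.4 via the first of these inequalities.
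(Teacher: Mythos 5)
Your proof is correct and matches the paper's route: the paper gives no standalone proof, simply citing Theorem 4.4 (Inequality 1) of the textbook and noting that the second bullet follows by applying $\ln(1+\delta) \geq 2\delta/(2+\delta)$ --- precisely the derivation you carry out (and correctly identify as the intended argument), with your reconstruction of the underlying moment-generating-function computation and the lower-tail bound being standard and accurate.
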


\subsection{Sequential Randomized Greedy MIS} 
\label{subsec:randomizedGreedyMISProperty}

The \emph{sequential randomized greedy MIS} algorithm processes each node in a sequential but random order. Each node is added to the output set if it is not a neighbor of a node already in that set. It is well-known that this algorithm outputs the lexicographically first MIS (LFMIS), with respect to the random node ordering.

Given a random node ordering $v_1, v_2, \ldots, v_n$, Lemma \ref{lem:sparsityLemmaForRandomOrder} --- a slight generalization of Lemma 1 in \cite{K18} --- shows that after the first $t$ nodes (according to the node ordering) are processed by the sequential randomized greedy order MIS, the maximum degree of the subgraph induced by the remaining nodes among the first $t' > t$ nodes (those which have not been added, nor are neighbors of an already added node) has decreased (almost) linearly in $t$. 
In fact, the lemma is more general. For example, it applies to distributed algorithms that compute the LFMIS over the subgraph induced by the first $t$ nodes, according to the random node ordering mentioned above. 

\begin{lemma}
\label{lem:sparsityLemmaForRandomOrder}
Let $t, t'$ be two integers such that $1 \leq t < t' \leq n$. Let $V_t$ denote the (set of the) first $t$ nodes, $V_{t'}$ the (set of the) first $t'$ nodes and $M_t$ the LFMIS over $G[V_t]$. Then, for any constant $\epsilon > 0$, $G[V_{t'} \setminus N(M_t)]$ has maximum degree at most $\frac{t'}{t}  \ln (n/\epsilon)$ with probability at least $1-\epsilon$.
\end{lemma}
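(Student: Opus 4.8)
Write $R := V_{t'}\setminus N(M_t)$ and let $d := \tfrac{t'}{t}\ln(n/\epsilon)$. By a union bound over the $n$ vertices, it suffices to show that for each fixed vertex $v$ the probability that $v\in R$ and $v$ has more than $d$ neighbors in $R$ is at most $\epsilon/n$. The starting point is the following characterization of the sequential randomized greedy process, run in the random order $v_1,\ldots,v_n$: a node joins the MIS exactly when it is processed while still undominated (this is literally the greedy acceptance rule). Consequently, writing $M$ for the greedy MIS on all of $G$, the first $t$ decisions depend only on $v_1,\ldots,v_t$, so $M_t = M\cap V_t$; moreover every node of $V_t$ is either in $M_t$ or dominated by it, i.e.\ $V_t\subseteq N(M_t)$. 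Hence $R$ consists precisely of the nodes among $v_{t+1},\ldots,v_{t'}$ (the \emph{window}) that are undominated by $M_t$ (the \emph{survivors}), and in particular $v\in R$ forces $v$ to lie in the window.

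The plan is to separate two sources of randomness: \emph{survival through the prefix} and \emph{landing in the window}. First I would condition on the unordered set $V_t$ together with the order in which its elements are processed. This fixes $M_t$, hence the survivor set $S:=V\setminus N(M_t)\subseteq V\setminus V_t$ and, for our fixed $v$, the number $g$ of $v$'s neighbors that survive. Conditioned on this, the remaining $n-t$ nodes occupy positions $t+1,\ldots,n$ in uniformly random relative order, the window being the first $t'-t$ of these slots. Thus $R\cap (N(v)\setminus\{v\})$ is obtained from the $g$ surviving neighbors of $v$ by a uniform without-replacement sampling that keeps each with probability at most $\tfrac{t'}{n}$; conditioned further on $v$ landing in the window, the number of residual neighbors is a without-replacement count with conditional mean at most $g\,t'/n$, which can be dominated by a binomial and handled by the upper-tail bound of Lemma~\ref{lem:chernoffBound}.

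The crux is a trade-off governed by $g$. If $g\le \tfrac{n}{t}\ln(n/\epsilon)$, then the conditional mean number of residual neighbors is at most $\tfrac{t'}{t}\ln(n/\epsilon)=d$, and the Chernoff bound makes exceeding $d$ unlikely. The only way to have many more surviving neighbors is for the random prefix to \emph{miss} a large part of $v$'s neighborhood, none of those neighbors being dominated by $M_t$, and I would show this is exponentially unlikely using the characterization above: if a neighbor is processed among the first $t$ nodes while still alive it joins $M_t$ and dominates $v$ (so $v\notin R$), which forces the surviving neighbors to avoid the prefix; a set of $k$ such neighbors lands entirely outside the prefix with probability at most $(1-t/n)^{k}\le e^{-tk/n}$. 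Balancing the factor $e^{-tk/n}$ (prefix misses $k$ neighbors) against the binomial tail (at least $d$ of the survivors fall in the window) shows the two effects cross exactly at the threshold $k\approx d=\tfrac{t'}{t}\ln(n/\epsilon)$, where the product is $e^{-\ln(n/\epsilon)}=\epsilon/n$; summing over the conditioning then yields the per-vertex bound. This is precisely where we adapt and generalize Lemma~1 of \cite{K18}: that lemma supplies the prefix-misses estimate, and the conditioning/Chernoff step upgrades the degree bound from the $t'=n$ form $\tfrac{n}{t}\ln(n/\epsilon)$ to the windowed form $\tfrac{t'}{t}\ln(n/\epsilon)$.

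I expect the main obstacle to be controlling the number of surviving neighbors for an \emph{arbitrary} graph topology, since whether a neighbor $w$ survives depends on $w$'s own (possibly large and overlapping) neighborhood rather than on $v$ alone: domination of $v$'s neighbors can come from MIS nodes outside $N(v)$, i.e.\ at distance two from $v$. The redeeming point is that such outside domination only \emph{reduces} the survivor count, so the worst case for the bound is the domination-free (star-like) configuration, for which the clean $e^{-tk/n}$ estimate is exact; making the reduction to this worst case rigorous, rather than re-deriving the survival estimate through a careful exposure of the greedy process in increasing position order, is the step that needs the most care. The remaining ingredients (the union bound, the conditioning, and the Chernoff estimate) are routine.
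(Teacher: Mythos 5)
Your overall architecture (union bound over a fixed vertex, deferred decisions, and a dichotomy between ``few undominated neighbors remain'' and ``the prefix would have dominated $v$'') matches the paper's, and your structural observations ($M_t = M\cap V_t$, $V_t\subseteq N(M_t)$, survivors lie outside the prefix) are correct. However, your two-stage decomposition --- first bound the number $g$ of surviving neighbors anywhere in $V\setminus N(M_t)$, then thin them into the window --- has a quantitative gap that prevents it from yielding the stated bound. Your balancing claim is arithmetically off: at $k\approx d=\frac{t'}{t}\ln(n/\epsilon)$ the prefix-miss factor is $e^{-tk/n}=(\epsilon/n)^{t'/n}$, which is close to $1$ when $t'\ll n$, not $\epsilon/n$; the survival threshold that actually makes $e^{-tk/n}\le\epsilon/n$ is $k_0=\frac{n}{t}\ln(n/\epsilon)$, exactly as in your first case. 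But then, when $g\le k_0$, the conditional mean of the window count is at most $g\,t'/n\le d$, and a mean \emph{at most} $d$ does not make exceeding $d$ unlikely: Lemma~\ref{lem:chernoffBound} with $\delta=0$ gives nothing, and if the mean equals $d$ the upper tail at $d$ is a constant. Splitting the error budget only pushes the mean above the target. Your route can be repaired to give $C\cdot\frac{t'}{t}\ln(n/\epsilon)$ for some constant $C>1$ (harmless for the application in Theorem~\ref{thm:MISTheorem}), but it cannot recover the lemma's constant~$1$.

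Second, your primary survival estimate is not sound as stated: the bound $(1-t/n)^k$ applies to a set of $k$ neighbors fixed \emph{before} the prefix is revealed, whereas your $k$ survivors are selected by the outcome --- a union bound over the $\binom{\deg v}{k}$ candidate sets destroys the estimate, and ``avoiding the prefix'' alone is far too weak (a vertex of degree $n-1$ typically has about $n-t$ neighbors outside the prefix). You correctly flag the ``reduction to the domination-free worst case'' as the delicate step; the rigorous fix is precisely the sequential exposure you relegate to a fallback, and it is what the paper does. The paper fixes the node $v_j$ at position $j$, reveals $v_1,\dots,v_t$ one at a time, and tracks the adaptive count $d_i=|(N(v_j)\cap V_{t'})\setminus N(M_{i-1})|$: any draw landing in this set joins the MIS and dominates $v_j$, so either $d_t\le\frac{t'}{t}\ln(n/\epsilon)$ (done, since the residual degree of $v_j$ is at most $d_t$) or $\Pr[\text{no hit}]\le(1-d_t/t')^t\le e^{-d_t t/t'}\le\epsilon/n$. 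This buys two things your plan lacks: no monotonicity or worst-case-topology argument is needed (outside domination only shrinks $U_i$, which is harmless for the dichotomy on $d_t$), and by restricting $U_i$ to $N(v_j)\cap V_{t'}$ from the start --- so that, conditioned on the set $V_{t'}$, each draw hits with probability at least $d_i/t'$ rather than $d_i/n$ --- the window thinning is absorbed into the exposure itself. The exact constant then falls out in one step, with no Chernoff bound and no case split on $g$.
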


\begin{proof}
Note that $(V_{t'} \setminus N(M_t)) \subseteq \{v_{t+1},\ldots,v_{t'}\}$. We show that, with probability at least $1-\epsilon/n$, for any $j \in [t+1,t']$, either $v_j$ has degree at most $\frac{t'}{t}  \ln (n/\epsilon)$ in $G[V_{t'} \setminus N(M_t)]$ or $v_j \in N(M_t)$. 
The lemma statement holds by a union bound (over $j$).

Let $j \in [t+1,t']$. 
We apply the principle of deferred decisions \cite{Upfalbook}. 
More precisely, we first fix (the random choice of) which node is in position $j$ --- that is, $v_j$. After which, we fix (the random choices of) which nodes are in position 1 to $t$ sequentially --- that is, $v_1$ to $v_t$.  
For any integer $i \in [1,t]$, let $V_i$ denote the first $i$ (fixed) nodes and $M_i$ be the LFMIS over $G[V_i]$. Additionally, let $U_i = (N(v_j) \cap V_{t'}) \setminus N(M_{i-1})$ and $d_i = |U_i|$. Then, $\Pr[v_i \in U_i \; | \; v_j,v_1,\ldots,v_{i-1}] \geq \frac{d_i}{t'-1-(i-1)} \geq \frac{d_i}{t'}$.

The sequence $(d_i)_{i \in [1,t]}$ is decreasing. If $d_t \leq \frac{t'}{t} \ln (n/\epsilon)$, then $v_j$ has degree at most $  \frac{t'}{t} \ln (n/\epsilon)$ in $G[V_{t'} \setminus N(M_t)]$. Otherwise, $d_t > \frac{t'}{t} \ln (n/\epsilon)$. 
Then, $\Pr[\forall i \leq t, v_i \notin U_i \; | \; v_j] \leq \prod_{i=1}^t \Pr[v_i \notin U_i \; | \; v_j,v_1,\ldots,v_{i-1}] \leq \prod_{i=1}^t (1-\frac{d_i}{t'}) \leq (1-\frac{d_t}{t'})^t \leq e^{-\frac{d_t}{t'} t} \leq e^{- \ln (n/\epsilon)} \leq \epsilon/n$. 
In other words, there exists $i \in [1,t]$ such that $v_i \in U_i$ with probability at least $1 - \epsilon/n$. In which case, since $M_i$ is the LFMIS over $G[V_i]$, $v_i \in M_i$. Thus, $v_i \in M_t$ and $v_j \in N(M_t)$ (with probability at least $1 - \epsilon/n$).
\end{proof}

\subsection{Simple Graph Shattering}
\label{subsec:graphShattering}

Consider some $n$-node graph $H$ with maximum degree $\Delta$ and partition the nodes into $2 \Delta$ sets uniformly at random. More precisely, each node is in set $U_j$, for any $j \in [1,2\Delta]$, with probability $1/ (2\Delta)$. Then, a simple branching process argument implies that the corresponding induced subgraphs $H[U_j]$ are ``shattered'': that is, they are composed of small $O(\log n)$-sized connected components with high probability.

\begin{lemma}
\label{lem:branchingProcess}
For any $j \in [1,2 \Delta]$, the connected components of $H[U_j]$ are of size at most $6 \ln (n/\epsilon)$ with probability at least $1 - \epsilon$.
\end{lemma}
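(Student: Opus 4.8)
The plan is to bound the size of a connected component containing a fixed node $v$ in $H[U_j]$ by dominating the natural breadth-first exploration of that component with a Galton--Watson branching process, and then to show that this branching process dies out quickly with high probability. First I would fix a node $v$ and condition on $v \in U_j$; the component of $v$ in $H[U_j]$ is then explored in BFS fashion, where at each node we reveal which of its (at most $\Delta$) neighbors in $H$ landed in $U_j$. Since each neighbor independently lands in $U_j$ with probability $1/(2\Delta)$, the number of newly discovered nodes at each exploration step is stochastically dominated by a $\mathrm{Binomial}(\Delta, 1/(2\Delta))$ random variable, which in turn is dominated by the offspring distribution of a branching process with mean $\mu = \Delta \cdot \frac{1}{2\Delta} = 1/2 < 1$. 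The key point is that this is a subcritical branching process, so the total progeny is small with high probability.

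The main step is to control the total number of nodes explored, i.e.\ the total progeny of the branching process. The standard way I would carry this out is to observe that if the component has size $s$, then the BFS exploration must reveal at least $s - 1$ ``successes'' (edges leading into $U_j$) while processing $s$ nodes, each contributing at most $\Delta$ independent Bernoulli$(1/(2\Delta))$ trials. So the event that the component has size at least $s$ is contained in the event that a sum of at most $s\Delta$ independent Bernoulli$(1/(2\Delta))$ trials is at least $s-1$. This sum has mean at most $s\Delta \cdot \frac{1}{2\Delta} = s/2$, so requiring it to exceed roughly $s$ is a large-deviation event. Applying the second (upper-tail) Chernoff bound of Lemma~\ref{lem:chernoffBound} with the appropriate $\delta$ (here $\delta \approx 1$, pushing the sum from mean $\approx s/2$ up to $\approx s$) gives a bound of the form $e^{-\Omega(s)}$ on this probability. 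Setting $s = 6\ln(n/\epsilon)$ makes this at most $\epsilon/n$ (the constant $6$ being chosen precisely so that the exponent $\Omega(s)$ dominates $\ln(n/\epsilon)$).

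Having bounded the probability that the component of a \emph{single} fixed node exceeds $6\ln(n/\epsilon)$ by $\epsilon/n$, I would finish by a union bound over all $n$ nodes $v$: with probability at least $1 - \epsilon$, \emph{every} node lies in a component of size at most $6\ln(n/\epsilon)$, which is exactly the claimed statement. I would need to handle the mild bookkeeping of the BFS exploration carefully --- for instance, the number of fresh Bernoulli trials per node is at most $\Delta$ (slightly fewer for nodes whose parent edge is already accounted for), and the trials revealed at distinct nodes are independent because they concern disjoint sets of edge-endpoints, so the principle of deferred decisions applies cleanly.

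The main obstacle, and the part requiring the most care, is making the domination argument rigorous: one must phrase the exploration so that the Bernoulli indicators counted across different explored nodes are genuinely independent (they are, since each edge of $H$ is examined at most once and membership in $U_j$ is chosen independently per node), and one must correctly set up the Chernoff bound so that ``component of size $\geq s$'' is captured by ``at least $s-1$ successes among $\leq s\Delta$ trials.'' Once the counting is set up correctly, the subcritical mean $1/2$ makes the Chernoff exponent linear in $s$, and the choice $s = 6\ln(n/\epsilon)$ closes the argument; I would double-check the constant $6$ against the exact form of the second Chernoff bound to ensure the final probability is at most $\epsilon/n$ per node.
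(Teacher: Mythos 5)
Your proposal is correct and matches the paper's proof essentially step for step: the same BFS exploration under deferred decisions, the same domination by a subcritical process with $\mathrm{Binomial}(\Delta, 1/(2\Delta))$ offspring and mean $1/2$, the same reduction of ``component size $\geq s$'' to ``$\geq s-1$ successes among $\leq s\Delta$ Bernoulli trials'' (the paper phrases this via the queue-size recursion $A_k = A_{k-1} - 1 + Y_k$ and the bound $\Pr[C' > k] \leq \Pr[\sum_{i=1}^k Y_i \geq k]$, which is the same counting), the same upper-tail Chernoff bound with $\delta = 1$ yielding $e^{-k/6}$, and the same union bound over at most $n$ nodes/components. Your check on the constant $6$ resolves exactly as in the paper: with $\delta = 1$ and mean $k/2$, the second bound of Lemma~\ref{lem:chernoffBound} gives $e^{-(k/2)/3} = e^{-k/6}$, so $k = 6\ln(n/\epsilon)$ gives $\epsilon/n$ per node.
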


\begin{proof}
For any $j \in [1,2 \Delta]$, consider some node $v \in U_j$. (If $|U_j| = 0$, the lemma statement obviously holds.) We assume, by the principle of deferred decisions, that $v$ is the only initially revealed node of $U_j$ (and for all other nodes, we do not know whether they are in $U_j$ or not) and we find out which nodes are in $U_j$ through a BFS search (over $H[U_j]$ only) starting at $v$. In more detail, the BFS queue initially consists only of $v$. In the first step, $v$ is dequeued and for each unrevealed neighbor $w \in N(v)$, we reveal whether $w$ is in $U_j$. For each such $w \in N(v)$, if $w \in U_j$ then $w$ is added to the queue. Once all neighbors have been revealed, the first step is done. Subsequent steps are executed similarly, but with that step's dequeued node, until the queue is empty. Importantly, the number of steps executed before the queue is empty is the size $C(v)$ of the connected component of $H[U_j]$ containing $v$. Moreover, it is a random variable that depends on each revealed node --- that is, whether that revealed node is in $U_j$ or not. Finally, each unrevealed node $w$, once revealed, is in $U_j$ with probability at most $\frac{1}{2 \Delta}$. 

The above (BFS search) randomized process is hard to analyze since a node dequeued in some step $k$ might have neighbors that were revealed in previous steps. Instead, we consider an easier-to-analyze but related randomized process: BFS on a branching process. 
Let the number of nodes in the queue at the start of step $k \geq 0$ be denoted by $A_t$. Initially, there is a single node in the queue --- that is, $A_0 = 1$. Subsequently, for any step $k \geq 1$, $A_k = A_{k-1} - 1 + Y_k$, where the random variables $(Y_k)_{k \geq 1}$ are independent and binomially distributed with parameters $\Delta$, the total number of events, and $\frac{1}{2 \Delta}$, the probability of each event being successful.
Then, the \emph{size} of this randomized process is $C' = \min\{k \geq 0 \; | \; A_t = 0\}$. Importantly, $C'$ dominates the random variable $C(v)$ --- that is, $\Pr[C(v) \geq k] \leq \Pr[C' \geq k]$ for any positive integer $k$. By the definition of $C'$, $\Pr[C' > k] = \Pr[A_1 > 0,\ldots,A_k > 0] \leq \Pr[A_k > 0] \leq \Pr[A_0 + \sum_{i=1}^k Y_i > k] = \Pr[\sum_{i=1}^k Y_i \geq k]$. Note that since the $Y_i$ are independent binomially distributed random variables (with parameters $\Delta$ and $\frac{1}{2 \Delta}$), $\sum_{i=1}^k Y_i$ is the sum of $k \Delta$ Bernoulli random variables with parameter $\frac{1}{2 \Delta}$ and thus $\mathbb{E}[\sum_{i=1}^k Y_i] = k/2$. Hence, by the Chernoff bound (see Lemma \ref{lem:chernoffBound}), $\Pr[\sum_{i=1}^k Y_i \geq k)] \leq e^{-k/6}$. Consequently, $\Pr[C' \geq 6 \ln (n/\epsilon)] \leq \epsilon/n$ for any constant $\epsilon > 0$. Since $C'$ dominates $C(v)$, $\Pr[C(v) \geq 6 \ln (n/\epsilon)] \leq \epsilon/n$. By union bound over all connected components of $H[U_j]$ (of which there are at most $n$), the lemma statement holds.
\end{proof}

\section{Auxiliary Procedures}\label{sec:prelim}

\subsection{The Virtual Binary Tree Technique}
\label{subsec:virtualBinaryTree}

We provide a virtual binary tree construction similar to that in \cite{BM21}.
Let $i$ be an integer, provided as a parameter. The virtual (full) binary tree $\mathcal{B}([1,i])$ has depth $d = \lceil \log i \rceil$ and thus $y = 2^{d + 1} - 1$ nodes. These nodes are labeled with integers in $[1,y]$ according to an in-order tree traversal. 
See Figure 1 (left).
Given $\mathcal{B}([1,i])$, we can define the more convenient node-labeled (full) binary tree $\mathcal{B}^*([1,i])$ as follows. The tree structure is the same, but the node labels of $\mathcal{B}^*([1,i])$ are obtained by applying $g(x) = \lfloor x/2 \rfloor + 1$ to the node labels of $\mathcal{B}([1,i])$. Using $\mathcal{B}^*([1,i])$, we define, for any integer $k \in [1,i]$, a \emph{communication set} $S_k([1,i])$ of $d$ integers in $[1,i]$, as follows: $S_k([1,i])$ consists of the labels of all ancestors of the leaf node labeled $k$ in  $\mathcal{B}^*([1,i])$. 
See Figure 1 (right).

\begin{figure}[ht]
\footnotesize
\begin{subfigure}{.4\textwidth}
    \centering
    \begin{forest}
    [8,circle [4,circle,minimum size=16pt [2,circle,minimum size=16pt [1] [3]] [6,circle,minimum size=16pt [5] [7] ]][12,circle [10,circle [9] [11]][14,circle [13][15]]]]
    \end{forest}
    \label{fig:virtualBinaryTree1}
\end{subfigure}%
\hfill
\begin{subfigure}{.4\textwidth}
    \centering
    \begin{forest}
    [5,rectangle,draw,minimum size=16pt [3,circle,dotted,draw,minimum size=16pt [2,circle,minimum size=16pt [1] [2]] [4,circle,dotted,draw,minimum size=16pt [3] [4] ]][7,circle,draw,minimum size=16pt [6,circle,draw,minimum size=16pt [5] [6]][8,circle,minimum size=16pt [7][8]]]]
    \end{forest}
    \label{fig:virtualBinaryTree2}
\end{subfigure}%
\caption{Binary tree $\mathcal{B}([1,6])$ on the left and binary tree $\mathcal{B}^*([1,6])$ on the right. $S_3([1,6])$ consists of the dotted circle and rectangle nodes' labels and $S_5([1,6])$ of the (non-dotted) circle and rectangle nodes' labels. Note that $5 \in S_3([1,6]) \cap S_5([1,6])$ and $3 < 5 \leq 6$.}
\end{figure}
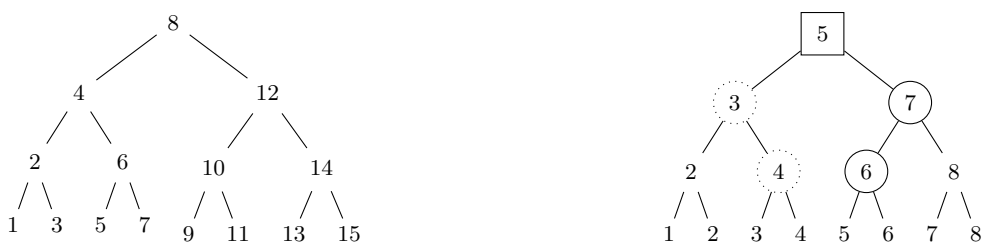

These sets have the following property (see Observation \ref{obs:communicationSet} below): for any integers $k, k' \in [1,i]$ such that $k < k'$, there exists an integer $r$ in both $S_k([1,i])$ and $S_{k'}([1,i])$ such that $k < r \leq k'$. 
Informally, we later use the sets $S_k([1,i])$ to decide when nodes with IDs in $[1,i]$ are awake or asleep. The above property allows us to decide, for any two nodes, on a common round in which they are guaranteed to be awake simultaneously (and thus communicate with each other).
Remember that in rounds in which nodes are not awake simultaneously, any message sent between two neighboring nodes is {\em lost if any one of them is asleep}.  

\begin{observation}
\label{obs:sizeCommunicationSet}
For any positive integers $k,i$ such that $1 \leq k \leq i$, $|S_k([1,i])| \leq \lceil \log i \rceil$. 
\end{observation}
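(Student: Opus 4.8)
The plan is to argue purely from the structure of the tree, since the quantity $|S_k([1,i])|$ is nothing more than the number of distinct labels attached to the ancestors of a fixed leaf. First I would recall that, by its definition, $\mathcal{B}([1,i])$ is a \emph{full} binary tree of depth $d = \lceil \log i \rceil$, and that $\mathcal{B}^*([1,i])$ has exactly the same tree structure (only the labels differ, being the images under $g(x) = \lfloor x/2 \rfloor + 1$ of the in-order labels of $\mathcal{B}([1,i])$). In particular, the set of ancestors of any leaf is the same in both trees; only the labels read off those ancestors change.

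Next I would use the elementary fact that in a full binary tree of depth $d$, every leaf sits at depth $d$, so the unique root-to-leaf path has $d+1$ vertices, of which exactly $d$ are proper ancestors of the leaf (one at each depth $0, 1, \ldots, d-1$). Since $1 \le k \le i \le 2^d$, the node labeled $k$ in $\mathcal{B}^*([1,i])$ is indeed one of the $2^d$ leaves, and hence it has precisely $d$ ancestors as \emph{tree nodes}.

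Finally, by definition $S_k([1,i])$ collects the \emph{labels} of these $d$ ancestor nodes. A set built from $d$ elements has cardinality at most $d$ (it could be strictly smaller if the relabeling via $g$ happened to assign equal labels to two ancestors, but for the upper bound we only need $\le$). Therefore $|S_k([1,i])| \le d = \lceil \log i \rceil$, as claimed. There is no real obstacle here: the only point requiring a moment's care is pinning down the depth of the tree and observing that all leaves lie at that depth, so that the ancestor count is exactly $\lceil \log i \rceil$ independently of $k$; the passage from ``$d$ ancestor nodes'' to ``at most $d$ distinct labels'' is immediate.
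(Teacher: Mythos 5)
Your proof is correct and matches the paper's reasoning: the paper states this observation without proof, treating it as immediate from the definition (which already describes $S_k([1,i])$ as a set of $d$ labels, one per ancestor of a leaf in a full binary tree of depth $d = \lceil \log i \rceil$), and your argument is precisely that unpacking, including the careful remark that relabeling via $g$ can only make the set smaller.
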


\begin{observation}
\label{obs:communicationSet}
For any positive integers $k, k',i$ such that $1 \leq k < k' \leq i$, there exists an integer $r \in S_k([1,i]) \cap S_{k'}([1,i])$ such that $k < r \leq k'$. 
\end{observation}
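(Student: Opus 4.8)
The plan is to exploit the fact that the relabeling $g$ turns $\mathcal{B}^*([1,i])$ into a clean ``search tree'' on leaf indices, so that the desired $r$ is simply the label of the lowest common ancestor (LCA) of the two relevant leaves.

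First I would pin down the leaf labels. In the in-order traversal of the full binary tree $\mathcal{B}([1,i])$ the leaves receive the odd labels $1,3,5,\dots,2^{d+1}-1$, so the $j$-th leaf from the left has $\mathcal{B}$-label $2j-1$. Applying $g(x)=\lfloor x/2\rfloor+1$ gives $g(2j-1)=j$, hence the $j$-th leaf of $\mathcal{B}^*([1,i])$ carries label $j$. Since $k,k'\le i\le 2^{d}$, both $k$ and $k'$ are genuine leaf indices.

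Next I would establish the key structural property of $\mathcal{B}^*([1,i])$: an internal node with $\mathcal{B}^*$-label $v$ (i.e.\ $\mathcal{B}$-label $2(v-1)$) has precisely the leaves of index $<v$ in its left subtree and the leaves of index $\ge v$ in its right subtree (within the range of that subtree). This follows from the binary-search-tree property of an in-order labeling --- everything in a left subtree has smaller $\mathcal{B}$-label, everything in a right subtree has larger --- translated through $g$; the only care needed is the off-by-one bookkeeping, checking that $2j-1<2(v-1)\iff j\le v-1$ and $2j-1>2(v-1)\iff j\ge v$. Consequently, descending from the root toward the leaf of index $k$, at each internal ancestor labeled $v$ one turns left when $k<v$ and right when $k\ge v$; equivalently, $S_k([1,i])$ is exactly the set of labels $v$ of the internal nodes on this root-to-leaf path.

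Finally I would identify $r$ with the LCA. Let $v^\ast$ be the $\mathcal{B}^*$-label of the lowest common ancestor of the leaves of index $k$ and $k'$. Because $k\ne k'$, this LCA is a genuine internal node lying on both root-to-leaf paths, so $v^\ast\in S_k([1,i])\cap S_{k'}([1,i])$. At the LCA the two paths diverge, so one leaf descends into the left subtree and the other into the right; by the turning rule this means exactly one of $k,k'$ is $<v^\ast$ and the other is $\ge v^\ast$, and since $k<k'$ it must be that $k<v^\ast\le k'$. Taking $r=v^\ast$ finishes the argument, and $r\le k'\le i$ keeps $r$ in the required range. I expect the only real obstacle to be the structural step: verifying that $g$ faithfully converts the in-order (BST) labeling into the ``separate the leaves at index $v$'' property, with the inequalities and the $\pm 1$ shifts all lining up. Once that property is in hand, the LCA argument is immediate and the bound $k<r\le k'$ drops out automatically.
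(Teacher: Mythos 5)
Your proof is correct and takes essentially the same route as the paper: both arguments set $r$ to be the $\mathcal{B}^*$-label of the lowest common ancestor of the leaves $k$ and $k'$, observe it lies in $S_k([1,i]) \cap S_{k'}([1,i])$ by the definition of communication sets, and deduce $k < r \leq k'$ from the in-order labeling of $\mathcal{B}$. The paper does this directly via the inequality $2k-1 < 2(r-1) < 2k'-1$, while you route it through the equivalent binary-search-tree separation property of $\mathcal{B}^*$; this is a cosmetic rather than substantive difference, and your off-by-one bookkeeping is accurate.
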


\begin{proof}
Let the lowest common ancestor node in $\mathcal{B}^*([1,i])$ of the leaves labeled $k$ and $k'$ be labeled with $r'$. Then, according to the definition of a communication set, $r'$ is in both $S_{k}([1,i])$ and $S_{k'}([1,i])$.
Moreover, note that the corresponding nodes in $\mathcal{B}([1,i])$ are the internal (lowest common ancestor) node labeled $2(r'-1)$ and the leaf nodes labeled $2k - 1$ and $2 k' - 1$. By the property of the in-order tree traversal, $2k - 1 < 2(r'-1) < 2 k' - 1$. Hence, $k < r'-1/2 <  k'$. Since $r'$ and $k'$ are integers, $k < r' \leq  k'$.
\end{proof}

\subsection{Labeled Distance Trees}

For any $V' \subseteq V$ where $G[V']$ is connected, a \emph{labeled distance tree} (introduced in~\cite{AMP22}) is an oriented node labeled spanning tree over $G[V']$ rooted at some node $r \in V'$. Each node's label is its distance to $r$ in the spanning tree. (Note that the distance to $r$ in the spanning tree may be much larger than that in $G$.) In the distributed implementation, the LDT satisfies the following properties: (i) all nodes in the tree know the ID of $r$, called the ID of the LDT, (ii) each node knows its depth in the tree (i.e., the hop-distance from itself to the root of the tree via tree edges), and (iii) each node knows the IDs of its parent and children, if any, in the LDT. If a given graph is disconnected, one can compute a disjoint set of such LDTs, one per connected component, which we refer to as a \textit{forest of labeled distance trees (FLDT)}. 

 Multiple distributed LDT construction algorithms (both randomized and deterministic) are presented in \cite{AMP22}. We are interested in their two deterministic construction algorithms, which give different guarantees. The first lemma below captures the worst-case awake and round complexities of their first construction algorithm: Algorithm {\LDTCONSTRUCTAWAKE}. (Here, we use the improved version of Theorem~4 in~\cite{AMP22} which can be found in the arXiv version.) On the other hand, the second lemma captures the properties of a distributed LDT construction algorithm (Algorithm {\LDTCONSTRUCTROUND}) with faster round complexity but larger awake complexity; Corollary 1 of~\cite{AMP22} states such an algorithm exists, and we give a full construction procedure in Appendix~\ref{sec:app-ldt}.

\begin{lemma}[Theorem 4, \cite{AMP22}]
\label{lem:LDT-slow-round-construction}
For any connected $V' \subseteq V$ of at most $n'$ nodes with unique IDs in $[1,I]$, where $n'$ and $I$ are known to all nodes, $\LDTCONSTRUCTAWAKE$ deterministically constructs an LDT over $G[V']$ with $O(\log n')$ awake complexity, $O(n' (\log n') \log^4 I)$ round complexity and $O(\log I)$ bit messages. 
\end{lemma}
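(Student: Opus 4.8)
The plan is to reconstruct the Bor\r{u}vka-style construction of \cite{AMP22}. I would maintain a partition of $V'$ into node-disjoint fragments, each of which is itself an LDT, starting from the $n'$ singletons and merging fragments over $O(\log n')$ phases until a single tree spanning $G[V']$ remains. In each phase every fragment selects an outgoing inter-fragment edge, breaking ties deterministically using the IDs of its endpoints, and fragments joined by chosen edges are merged. As in the standard analysis of Bor\r{u}vka's algorithm, each phase reduces the number of fragments by a constant factor, so $O(\log n')$ phases suffice; connectivity of $G[V']$ guarantees that every non-final fragment has an outgoing edge.

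The heart of the argument would be an awake-efficient communication primitive inside each fragment. Following the distributed layered tree technique of \cite{BM21}, I would organize each fragment so that both a root-to-leaves broadcast and a leaves-to-root convergecast can be carried out in only $O(1)$ awake rounds \emph{per node}, even when the fragment has large (up to $n'$) diameter. The idea is to schedule the wake-ups of nodes at successive depths as a wave, so that a node is awake precisely when its parent and children are relaying the relevant message. Since each of the $O(\log n')$ merge phases uses only a constant number of such broadcast/convergecast passes --- to elect the fragment leader, announce the chosen merge edge, and propagate the new root identity --- each node is awake $O(\log n')$ times in total, which gives the claimed awake complexity.

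To maintain the distance labels, I would observe that when a child fragment is attached to a parent fragment via a chosen edge $(u,w)$ with $w$ in the parent, the new distance-to-root of each node $x$ in the re-rooted child equals its old distance to $u$ plus the offset $\operatorname{dist}(w)+1$; this single offset can be broadcast down the child fragment in the same pass, so labels are updated at no extra asymptotic cost. Every message then carries only a constant number of IDs and distance values, each representable in $O(\log I)$ bits, which yields the message-size bound. For the round complexity, each fragment-internal pass costs $O(D)$ total rounds with $D \le n'$ the fragment diameter, while the wave scheduling together with the deterministic slotting of transmissions by IDs in $[1,I]$ contributes a $\poly(\log I)$ overhead per pass; multiplying the per-phase cost by the $O(\log n')$ phases gives the stated $O(n'(\log n')\log^4 I)$ bound.

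The hard part will be reconciling the two complexity measures simultaneously: the wave scheduling that keeps each node awake for only $O(1)$ rounds per pass must still guarantee that information crosses a fragment whose diameter grows toward $n'$, with no node oversleeping a message intended for it. Getting this scheduling exactly right --- in particular ensuring that the deterministic ID-based time slots never collide while staying within the $\poly(\log I)$ round overhead --- is precisely the delicate accounting carried out in \cite{AMP22}, and I would defer to that construction (specifically the improved arXiv version of Theorem~4) for the full verification.
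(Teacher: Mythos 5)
First, a point of reference: the paper does not prove this lemma at all --- it is imported verbatim as Theorem~4 of \cite{AMP22} (in its improved arXiv form), so there is no internal proof to match your reconstruction against. The closest in-paper material is Appendix~\ref{sec:app-ldt}, which describes the \emph{round-efficient} sibling construction \LDTCONSTRUCTROUND{}. Your Bor\r{u}vka-style outline with wave-scheduled broadcast/convergecast is faithful to that template (the waves are the paper's transmission schedules \TS{}, which indeed give $O(1)$ awake and $O(n')$ rounds per pass), but two of your steps have genuine gaps. The larger one is the merge step: ``fragments joined by chosen edges are merged'' hides the central difficulty, namely that the supergraph of chosen outgoing edges can have connected components of unbounded diameter, so merging a whole component in one phase while keeping per-node awake cost $O(1)$ per pass requires a symmetry-breaking decomposition of each component into constant-depth merge trees. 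The appendix's variant does exactly this via a simulated Cole--Vishkin $6$-coloring followed by a maximal matching (limiting merge-tree diameter to $4$), which is precisely what costs it the $\log^* I$ factor; \LDTCONSTRUCTAWAKE{} must achieve $O(\log n')$ awake \emph{without} that factor, paying instead the $\log^4 I$ round overhead, and your sketch neither supplies the alternative mechanism nor derives the exponent $4$ --- it asserts ``$\poly(\log I)$ slotting'' and defers to \cite{AMP22}, which is exactly the part that distinguishes this lemma from Lemma~\ref{lem:ldt-fast-round-construction}.

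The second gap is your distance-label update. The ``single offset'' claim fails as stated: each node stores its distance to its \emph{old} root, not to the attachment endpoint $u$, and re-rooting a tree changes distances non-uniformly (nodes on the old root-to-$u$ branch have their labels change in the opposite direction from nodes hanging off that branch). The actual construction repairs labels with two transmission-schedule passes --- first the branch from the attachment point up to the old root updates and re-orients using the \UPSEND{}/\UPRECEIVE{} rounds, then the remaining nodes update downward via \DOWNSEND{}/\DOWNRECEIVE{} --- iterated a constant number of times because merge trees have diameter at most $4$. Since the crux you identify yourself (reconciling $O(1)$-awake waves with deterministic collision-free merging inside the stated round budget) is delegated wholesale to the citation, your proposal is an outline of the cited construction rather than a proof; in substance it establishes no more than the paper's own treatment, which is simply to cite Theorem~4 of \cite{AMP22}.
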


\begin{lemma}
\label{lem:ldt-fast-round-construction}
For any connected $V' \subseteq V$ of at most $n'$ nodes, where $n'$ is known to all nodes, with unique IDs in $[1,I]$, $\LDTCONSTRUCTROUND$ deterministically constructs an LDT over $G[V']$ with $O((\log n') \log^* I)$ awake complexity, $O(n' (\log n') \log^* I)$ round complexity and $O(\log I)$ bit messages. 
\end{lemma}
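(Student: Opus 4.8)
The plan is to obtain \LDTCONSTRUCTROUND{} as a refinement of \LDTCONSTRUCTAWAKE{} (Lemma~\ref{lem:LDT-slow-round-construction}), which already builds a correct LDT with only $O(\log n')$ awake complexity. The quantity we must improve is the $\log^4 I$ factor in its round complexity, which we want to replace by $\log^* I$, paying for it with an extra $\log^* I$ factor in the awake complexity. The $\log^4 I$ factor is a consequence of coordinating all communication over the full ID range $[1,I]$: nodes schedule their awake rounds using a virtual binary tree $\mathcal{B}([1,I])$ as in Subsection~\ref{subsec:virtualBinaryTree}, whose depth is $\lceil \log I\rceil$, and between two consecutive awake rounds a node sleeps through a block whose length grows polynomially in $\log I$. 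Hence, if the labels used for this coordination lived in a range of size only $O(\log n')$, the corresponding round-complexity factor would drop to $O(\poly\log\log n')$, which is absorbed into the stated bound.

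First I would insert a preprocessing stage that iteratively shrinks the \emph{effective} range of the labels used only for scheduling, while leaving $G[V']$ and the true IDs (which continue to appear inside messages) untouched. Set $I_0 = I$ and, in phase $t \geq 0$, run a single deterministic relabeling step that converts labels in $[1,I_t]$ that are distinct among the relevant scheduling neighbors into new such labels in a range of size $I_{t+1} = O(\log I_t)$, in the spirit of a Linial/Cole--Vishkin reduction applied to the bounded-degree conflict structure that governs the scheduling (rather than to $G[V']$, whose degree may be as large as $n'-1$). Since $I_{t+1} = O(\log I_t)$, after $O(\log^* I)$ phases the scheduling labels lie in a range of size $O(\log n')$, at which point I would stop and invoke \LDTCONSTRUCTAWAKE{} on these reduced labels; there its round complexity is $O\!\left(n'(\log n')\,\poly(\log\log n')\right)$ and its awake complexity $O(\log n')$, both dominated by the preprocessing.

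For the accounting, the goal is to show that each of the $O(\log^* I)$ phases uses $O(\log n')$ awake rounds and $O(n'\log n')$ additional rounds, so that the sums telescope to exactly $O((\log n')\log^* I)$ awake complexity and $O(n'(\log n')\log^* I)$ round complexity. Correctness is inherited directly from Lemma~\ref{lem:LDT-slow-round-construction}: the relabeling affects only \emph{when} nodes are awake, not \emph{which} LDT is produced, and since every message still transmits the original IDs in $[1,I]$, the output LDT satisfies properties (i)--(iii) with respect to the true IDs and uses $O(\log I)$-bit messages throughout.

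The hard part will be keeping each phase's awake complexity at $O(\log n')$ even while the current label range $I_t$ is still enormous (superpolynomial in $n'$, the regime that matters for our main result). Coordinating the relabeling of phase $t$ naively through a virtual tree over $[1,I_t]$ would cost $\Theta(\log I_t)$ awake rounds, and summing this over the phases would give $\sum_t O(\log I_t) = O(\log I)$ awake rounds, which for superpolynomial $I$ far exceeds the target $O((\log n')\log^* I)$. The crux is therefore to make the per-phase scheduling depend on the \emph{number} of distinct labels in use (at most $n'$, giving depth $O(\log n')$) rather than on their \emph{magnitude} ($I_t$); isolating a bounded-degree scheduling-conflict structure, guaranteeing that the reduced labels stay collision-free on it after each phase (so that Observation~\ref{obs:communicationSet} still lets any two conflicting nodes meet while awake), and realizing this coordination within an $O(\log n')$ awake budget is the technical heart of the appendix construction.
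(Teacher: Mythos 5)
There is a genuine gap, and you have in fact named it yourself: the entire proposal hinges on a ``Linial/Cole--Vishkin reduction applied to the bounded-degree conflict structure that governs the scheduling,'' but no such structure is ever exhibited, and your last paragraph concedes that realizing each relabeling phase within an $O(\log n')$ awake budget is ``the technical heart'' --- i.e., the proof is deferred precisely at its crux. The difficulty is not incidental: the communication needed during LDT construction is over $G[V']$ itself, whose degree can be $n'-1$, so there is no obvious bounded-degree graph to color; and a Cole--Vishkin step requires conflicting nodes to actually exchange their current labels, which requires them to be awake simultaneously --- but arranging such meetings via communication sets over the current label range $[1,I_t]$ is exactly what costs $\Theta(\log I_t)$ awake rounds per phase, the cost you are trying to avoid. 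The argument is therefore circular: cheap scheduling presupposes small labels, and producing small labels presupposes cheap scheduling. (Separately, your diagnosis that the $\log^4 I$ factor in Lemma~\ref{lem:LDT-slow-round-construction} stems from sleeping blocks of length $\poly(\log I)$ between virtual-tree awake rounds is an unsupported guess about the internals of \LDTCONSTRUCTAWAKE{}, which the paper imports as a black box from [AMP22].)

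The paper's construction (Appendix~\ref{sec:app-ldt}) avoids this chicken-and-egg problem by not deriving \LDTCONSTRUCTROUND{} from \LDTCONSTRUCTAWAKE{} at all. It is a standalone GHS/Boruvka-style fragment-merging algorithm: starting from singleton LDTs, it runs $O(\log n')$ merge phases, and within each phase every fragment finds its minimum outgoing edge using $O(1)$-awake, $O(n')$-round transmission-schedule primitives (\UPCASTMIN{}, \DOWNCAST{}, \TRANSMITADJACENT{}). The Cole--Vishkin coloring is applied not to raw node labels but to the \emph{trees of fragments} in the merge supergraph --- each fragment talks only to its parent fragment through one already-identified edge, so degree is not an issue, and each of the $O(\log^* I)$ coloring rounds is simulated in $O(1)$ awake rounds via the transmission schedules of the already-built fragment LDTs. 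The coloring yields a maximal matching, hence constant-diameter merge trees, after which labels are repaired. This puts the $\log^* I$ factor \emph{inside} each of the $O(\log n')$ phases (rather than your transposed $O(\log^* I)$ phases of $O(\log n')$ awake cost each), giving $O((\log n')\log^* I)$ awake and $O(n'(\log n')\log^* I)$ round complexity directly, with $O(\log I)$-bit messages since only IDs and edge identifiers are transmitted. To repair your proposal you would essentially have to rebuild this fragment machinery, since the fragments are what supply both the bounded-degree structure for coloring and the $O(1)$-awake communication channel that your relabeling phases lack.
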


Next, we define two operations over a labeled distance tree.
\begin{definition}\label{def:ldt-ops}
For any connected $V' \subseteq V$ and an LDT spanning $V' \subseteq V$:
\begin{itemize}
    \item The \emph{broadcast} operation consists of sending the root's (input) message, denoted by $m_r$, to all of the LDT's nodes.
    \item The \emph{ranking} operation consists of having the nodes of $V'$ compute a total ordering --- more precisely, each node knows its rank in the ordering --- as well as the size $|V'|$. 
\end{itemize} 
\end{definition}

\cite{AMP22} provides a distributed algorithm for broadcasting over an LDT (see Observation 2 in~\cite{AMP22}). A distributed algorithm for ranking over an LDT is presented in Appendix~\ref{sec:app-ldt}.
The properties of these two algorithms are captured by the following lemma.

\begin{lemma}
\label{lem:LDT-operation}
For any LDT over at most $n'$ nodes, where $n'$ is known to all nodes, with unique IDs in $[1,I]$,
broadcast and ranking can be executed deterministically with $O(1)$ awake complexity, $O(n')$ round complexity and the size of the messages are $O(|m_r|)$ bits and $O(\log I)$ bits respectively. 
\end{lemma}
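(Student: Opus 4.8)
The plan is to exploit the defining feature of an LDT --- that every node knows its own depth (its distance label) and that $n'$ is a common, known upper bound on the tree's depth --- to schedule all communication along a single depth-monotone ``wave,'' so that each node is awake only a constant number of times.

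\textbf{Broadcast.} I would have the root push $m_r$ down the tree in a depth-synchronized fashion: in round $t$, every node at depth $t-1$ forwards $m_r$ to its children. Concretely, a node at depth $d$ wakes in round $d$ to receive $m_r$ from its parent and in round $d+1$ to forward it to its children, while the root (depth $0$) simply sends in round $1$. Since each node participates in exactly two rounds, the awake complexity is $O(1)$; since the maximum depth is at most $n'-1$, the last transmission occurs by round $n'$, giving round complexity $O(n')$; and since the only message ever sent is $m_r$, the message size is $O(|m_r|)$. This is essentially the broadcast of Observation 2 in \cite{AMP22}.

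\textbf{Ranking.} I would compute ranks in two depth-synchronized passes. The first is a convergecast computing subtree sizes: reversing the broadcast schedule, a node at depth $d$ wakes in round $n'-d-1$ to collect the subtree sizes reported by its children (all of which lie at depth exactly $d+1$ and therefore all report in that same round), sets its own subtree size to one plus the sum, and wakes again in round $n'-d$ to report this value to its parent; at the end the root learns $|V'|=n'$. The second pass is a broadcast distributing rank intervals: the root fixes the global ordering (e.g.\ a pre-order), keeps the first rank for itself, and hands each child the starting index of the contiguous block reserved for that child's subtree, using the children's subtree sizes it already learned during the convergecast; every node repeats this locally upon receiving its own block start. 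Each pass is $O(1)$ awake and $O(n')$ rounds, so the totals remain $O(1)$ and $O(n')$; and every transmitted quantity (subtree sizes, ranks, and $n'$) is an integer in $[1,n'] \subseteq [1,I]$, so messages are $O(\log I)$ bits.

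\textbf{Main obstacle.} The delicate point is realizing the convergecast in $O(1)$ awake rounds even though leaves occur at many different depths: a node must not report before all of its children have, yet it must not remain awake waiting for the deepest leaf in its subtree. The resolution is that the LDT guarantees all children of a depth-$d$ node sit at depth exactly $d+1$, so a single depth-indexed schedule simultaneously aligns each parent's one listening round with all of its children's sending rounds, and the strict monotonicity of the schedule in depth ensures every subtree drains bottom-up in time without any node needing to know subtree heights. The only remaining wrinkle --- that the exact maximum depth $D$ is not known a priori, and computing it would itself require a convergecast --- is sidestepped by anchoring the schedule at the known upper bound $n' \ge D+1$ rather than at $D$, which merely pads the schedule with empty rounds while preserving both the $O(1)$ awake and $O(n')$ round bounds.
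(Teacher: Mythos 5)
Your proposal is correct and matches the paper's own construction: the paper's ``transmission schedule'' is exactly your depth-indexed wave (a node at depth $d$ receives/sends in rounds $d$, $d+1$ on the way down and in mirrored rounds anchored at the known bound $n'$ on the way up), and its ranking procedure is likewise a subtree-size convergecast followed by a downward distribution of rank offsets. The only difference is cosmetic --- you assign ranks in pre-order while the paper uses a generalized in-order --- which is immaterial since the lemma only requires some total ordering.
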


\subsection{Simple Awake-Efficient MIS Algorithms}
\label{subsec:auxMIS}

We provide two simple deterministic distributed lexicographically first MIS (LFMIS) algorithms with good awake complexity. Note that it is important for our main result (see Section \ref{sec:MIS}) that these algorithms compute the LFMIS rather than any arbitrary MIS.

The first algorithm (\AuxMISone{}) runs in $O(\log I)$ awake time, where $I$ is an upper bound on the nodes' IDs, and illustrates how to use the virtual binary tree technique (see Section \ref{subsec:virtualBinaryTree}). At a high-level, \AuxMISone{} is an awake-efficient version of the naive distributed implementation of sequential greedy MIS. Recall that the naive algorithm runs in $I$ rounds, and in the $i$th round, the node with ID $i$ (if it exists) communicates with all neighbors to check if any neighbor with smaller ID is already in the MIS. If not, it joins the MIS. Although \AuxMISone{} also has poor $O(I)$ round complexity, its $O(\log I)$ awake complexity is exponentially smaller. 

The second algorithm (\AuxMIStwo{}) runs in $O(\log n')$ awake time, where $n'$ is the number of nodes. Of particular interest to us is when $\log n'$ may be much smaller than the length of the node's IDs. This naturally happens if \AuxMIStwo{} is run on a subgraph obtained after shattering a much larger graph (e.g., with exponentially more nodes $n$) whose nodes had unique IDs. At a high-level, we compute labeled distance trees, then the root computes $n'$ and assigns uniformly random and small enough $O(\log n')$-length IDs to each node, all in $O(\log n')$ awake time. All nodes then compute the LFMIS using \AuxMISone{} and these new IDs in $O(\log n')$ awake time.

\smallskip
\noindent {\bf \AuxMISone{}: MIS in $O(\log I)$ awake time.} We assume nodes are given unique IDs in $[1,I]$, for some known value $I$. Initially, each node starts in the undecided state and computes the virtual binary tree $\mathcal{B}^*([1,I])$ locally (for the description, see Subsection \ref{subsec:virtualBinaryTree}). Then, nodes compute the LFMIS in $I$ rounds. In round $r$, the node that has ID $r$ as well as all nodes $u$ for which $r \in S_{id_u}([1,I])$ wake up --- where $S_{id_u}([1,I])$ is the \emph{communication set} defined using $\mathcal{B}^*([1,I])$, see Figure~\ref{fig:figureForVTMIS} for an example. Intuitively, the additional nodes wake up to communicate information on their state (i.e., undecided, in MIS or not in MIS), and the use of communication sets results in low awake time. All awake nodes send their state to all neighbors. Any awake undecided node that learns one of its neighbors is  in the MIS sets its state to ``not in MIS''. If the node with ID $r$ remains undecided despite the received messages, then it joins the MIS and sets its state to ``in MIS''.

\begin{lemma}
\label{lem:awakeDistributedGreedyMIS}
\AuxMISone{} computes the LFMIS with $O(\log I)$ awake complexity, $O(I)$ round complexity and $O(\log I)$ bit messages.
\end{lemma}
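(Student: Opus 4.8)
The plan is to establish the three complexity bounds directly from the construction and to prove correctness by strong induction on the processing round. The awake and message bounds are immediate: a node $u$ is awake only in round $id_u$ and in the rounds belonging to its communication set $S_{id_u}([1,I])$, so by Observation \ref{obs:sizeCommunicationSet} it is awake in at most $1 + \lceil \log I \rceil = O(\log I)$ rounds; the algorithm runs for exactly $I$ rounds, giving $O(I)$ round complexity; and since each node transmits only its current state (one of undecided, in MIS, not in MIS), messages have $O(1) = O(\log I)$ bits.

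The heart of the argument is showing that the output equals the LFMIS with respect to the ID ordering. I would prove, by strong induction on $r \in [1,I]$, the statement $P(r)$: node $r$ (if present) makes the correct LFMIS decision by the end of round $r$ --- it joins the MIS if and only if none of its neighbors with smaller ID belongs to the LFMIS. The crucial timing observation is that a node sets its state to ``in MIS'' only at the \emph{end} of its own processing round, so a node $w$ can transmit ``in MIS'' in a round $s$ only when $id_w < s$.

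For the inductive step, first suppose $r$ has a smaller-ID neighbor $w$ in the LFMIS. By $P(id_w)$, node $w$ has state ``in MIS'' from the end of round $id_w$ onward. Observation \ref{obs:communicationSet} supplies a round $s \in S_{id_w}([1,I]) \cap S_r([1,I])$ with $id_w < s \le r$; in round $s$ both nodes are awake and $w$ (already in the MIS) sends ``in MIS'', so $r$ becomes ``not in MIS'' no later than round $r$ and does not join, matching the LFMIS. Conversely, suppose $r$ has no smaller-ID LFMIS neighbor; I show $r$ stays undecided until it joins in round $r$. If instead $r$ were set to ``not in MIS'' at some round $s' \le r$, it must have received ``in MIS'' from a neighbor $w'$, which by the timing observation satisfies $id_{w'} < s' \le r$; then $P(id_{w'})$ forces $w' \in$ LFMIS, contradicting the assumption. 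A larger-ID neighbor cannot be responsible, since it is still undecided in every round up to $r$. Hence $r$ joins, again matching the LFMIS. Finally, since every non-LFMIS node is eliminated by round $I$ and every LFMIS node joins in its own round, all nodes have terminated correctly after the $I$ rounds.

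I expect the main obstacle to be the careful handling of intra-round timing --- pinning down exactly when a node first advertises ``in MIS'' and verifying that the guaranteed common round $s$ satisfies $s \le r$ (so the message arrives in time) while ruling out spurious ``not in MIS'' updates triggered by larger-ID neighbors. Once the bracketing $id_w < s \le r$ from Observation \ref{obs:communicationSet} is combined with the fact that membership is only ever advertised after a node's own round, the induction closes cleanly.
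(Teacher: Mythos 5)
Your proof is correct and follows essentially the same route as the paper's: both arguments reduce correctness to the fact that, via Observation \ref{obs:communicationSet}, any two neighbors with IDs $id_w < r$ share a common awake round $s$ with $id_w < s \leq r$, so each node learns its smaller-ID neighbors' MIS decisions before (or in) its own decision round, while the awake bound follows from Observation \ref{obs:sizeCommunicationSet}. You merely make explicit what the paper leaves implicit --- the strong induction on rounds and the converse check that no node is spuriously eliminated (your parenthetical that larger-ID neighbors are ``still undecided'' up to round $r$ is not quite accurate, as they may already be ``not in MIS,'' but your timing observation that ``in MIS'' is only ever advertised after a node's own round already closes that case) --- so the substance is identical.
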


\begin{proof}
We first prove the correctness. The main difference with the naive distributed implementation of sequential greedy MIS is that here, in any round $r \in [1,I]$, not all nodes may be awake and thus the node $v$ with ID $r$ may be unaware that one of its neighbors is already in the MIS. However, by Observation \ref{obs:communicationSet}, there exists for any node $u$ with ID $r' < r$, a round $r^*$ such that $r' < r^* \leq r$ and both $u$ and $v$ are awake in $r^*$. Since $u$ can only decide to join the MIS in round $r'$ by the algorithm's definition, $u$ successfully communicates whether it joins the MIS or not to $v$ in round $r^*$. It follows that \AuxMISone{} implements sequential greedy MIS, and thus correctly computes an LFMIS.

As for the awake complexity, note that by Observation \ref{obs:sizeCommunicationSet}, each communication set of $\mathcal{B}^*([1,I])$ is of size $O(\log I)$. Since each node $u$ is only awake in rounds $r \in S_{id_u}([1,I])$, each node is awake for $O(\log I)$ rounds.
\end{proof}

\begin{figure}[ht]
\footnotesize
\begin{subfigure}{.48\textwidth}
    \centering
    \begin{forest}
    [5,circle,draw [3,circle,draw [2,circle [1] [2]] [4,circle,draw [3] [4] ]][7,circle [6,circle [5] [6]][8,circle [7][8]]]]    \end{forest}
    \caption{$S_3([1,6])$ consists of the circle nodes' labels.}
    \label{fig:virtualBinaryTree5}
\end{subfigure}\hfill%
\begin{subfigure}{.48\textwidth}
    \centering
    \begin{forest}
    [5,circle,draw [3,circle [2,circle [1] [2]] [4,circle [3] [4] ]][7,circle,draw [6,circle,draw [5] [6]][8,circle [7][8]]]]
    \end{forest}
    \caption{$S_5([1,6])$ consists of the circle nodes' labels.}
    \label{fig:virtualBinaryTree6}
\end{subfigure}%
\caption{For the example, consider $I = 6$ and two nodes $u,v$ with IDs 3 and 5 respectively. Node $u$ is awake in rounds 3, 4 and 5 and $v$ is awake in rounds 5 and 6 (but not in round 7, since there are only $I$ rounds). It can be seen that $u$ communicates whether it has joined the MIS to $v$ in round 5.}
\label{fig:figureForVTMIS}
\end{figure}
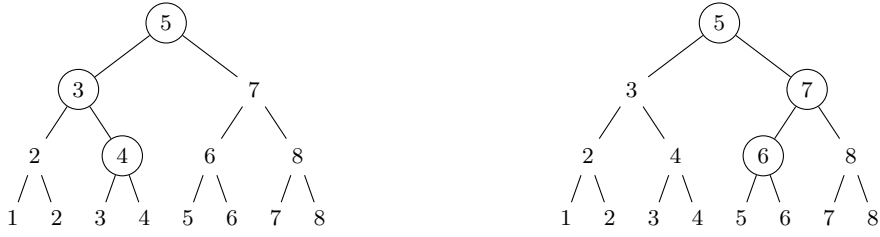

\noindent {\bf \AuxMIStwo{}: MIS in $O(\log n')$ awake time.} We assume that each node knows the value $n'$, an upper bound on the number of nodes of the connected subgraph to which the node belongs. We also assume that the (at most) $n'$ nodes are given unique IDs in $[1,I]$, for some known value $I$ (where $I$ may be exponentially larger than $n'$), and that messages can contain up to $O(\log I)$ bits. First, all nodes participate in the construction of an LDT. (The correctness is guaranteed by the fact that all nodes know the same bound $n'$.) Second, all nodes participate in an operation to (i) compute the exact number of nodes $n''$ in the LDT, and (ii) allow each node to know its rank in a given total ordering of the nodes. Third, the root of the LDT locally computes a uniformly random permutation of $[1,n'']$ and broadcasts it in $O((n' \log n')/ \log I)$ consecutive broadcasts over the LDT. (In each broadcast, we can transmit $O(\log I)$ bits of the total $O(n'' \log n'')$ bits. Once all of these bits have been transmitted, ``null'' messages are sent for the remaining broadcasts, if any.) Each node uses its previously computed rank to retrieve its ID in the permutation. Finally, all nodes run \AuxMISone{} using these smaller IDs.

\begin{lemma}
\label{lem:auxiliaryMISLemma}
\AuxMIStwo{} computes an LFMIS with respect to some uniformly random node ordering, and not to the ID-based ordering, with $O(\log n' + (n' \log n')/ \log I)$ awake complexity, $O(n' (\log n') \log^4 I + ((n')^2 \log n')/ \log I)$ round complexity and $O(\log I)$ bit messages. 
\end{lemma}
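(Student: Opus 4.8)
The plan is to verify Lemma~\ref{lem:auxiliaryMISLemma} by tracking \AuxMIStwo{} through its four stages and accounting separately for correctness, awake complexity, round complexity, and message size. First I would establish \emph{correctness}, namely that the output is an LFMIS with respect to a uniformly random node ordering. The LDT construction (Lemma~\ref{lem:LDT-slow-round-construction}) correctly spans the connected subgraph because all nodes share the same bound $n'$. The ranking operation (Lemma~\ref{lem:LDT-operation}) yields the exact count $n''$ and a total ordering of the nodes. The crucial point is that the root draws a \emph{uniformly random} permutation of $[1,n'']$ and, via broadcast, assigns each node the permutation value indexed by its rank; since the permutation is uniform and the rank-to-node map is a fixed bijection, the induced node ordering is uniformly random. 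Feeding these new $[1,n'']$-valued IDs into \AuxMISone{} then computes the LFMIS with respect to exactly this random ordering, by Lemma~\ref{lem:awakeDistributedGreedyMIS}. I would stress (as the statement itself flags) that the ordering is the random one, \emph{not} the original ID-based ordering.

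Next I would bound the \emph{awake complexity} additively across the stages, using that awake rounds in disjoint stages simply sum. LDT construction costs $O(\log n')$ awake rounds (Lemma~\ref{lem:LDT-slow-round-construction}); ranking and each individual broadcast cost $O(1)$ awake rounds (Lemma~\ref{lem:LDT-operation}). The permutation broadcast consists of $O((n' \log n')/\log I)$ consecutive broadcasts: the total payload is $O(n'' \log n'') = O(n' \log n')$ bits transmitted in chunks of $O(\log I)$ bits each, so the number of broadcasts is $O((n' \log n')/\log I)$, each contributing $O(1)$ awake rounds, for $O((n' \log n')/\log I)$ total. Finally \AuxMISone{} on IDs in $[1,n'']$ costs $O(\log n'') = O(\log n')$ awake rounds (Lemma~\ref{lem:awakeDistributedGreedyMIS}). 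Summing gives the claimed $O(\log n' + (n' \log n')/\log I)$ awake complexity.

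For \emph{round complexity} I would again sum stage by stage. LDT construction costs $O(n'(\log n')\log^4 I)$ rounds (Lemma~\ref{lem:LDT-slow-round-construction}). Ranking costs $O(n')$ rounds, and each broadcast costs $O(n')$ rounds (Lemma~\ref{lem:LDT-operation}), so the $O((n'\log n')/\log I)$ permutation broadcasts cost $O((n')^2 (\log n')/\log I)$ rounds in total. The final \AuxMISone{} invocation runs on IDs in $[1,n'']$, hence costs $O(n'')=O(n')$ rounds (Lemma~\ref{lem:awakeDistributedGreedyMIS}), which is absorbed. Collecting the dominant terms yields $O(n'(\log n')\log^4 I + ((n')^2 \log n')/\log I)$. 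The message size is $O(\log I)$ throughout: LDT construction and ranking use $O(\log I)$-bit messages, each permutation-broadcast chunk is $O(\log I)$ bits by design, and the final \AuxMISone{} uses IDs in $[1,n'']$ which fit in $O(\log n') = O(\log I)$ bits.

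The main obstacle I anticipate is the bookkeeping of the broadcast stage, both its count and its cost: one must confirm that $O((n'\log n')/\log I)$ broadcasts indeed suffice to transmit the $O(n''\log n'')$-bit permutation at $O(\log I)$ bits per broadcast (justifying the ``null'' padding for any leftover broadcasts), and then correctly propagate this count through both the $O(1)$-awake and $O(n')$-round costs per broadcast to obtain the two fractional terms. A secondary subtlety is ensuring the new IDs are genuinely unique and drawn from a uniform ordering, so that the guarantees of Lemma~\ref{lem:awakeDistributedGreedyMIS} (which assumes unique IDs) apply and the ``uniformly random node ordering'' claim is justified; this follows because a permutation is injective and the rank map is a bijection. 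The remaining steps are routine additive combinations of the cited lemmas.
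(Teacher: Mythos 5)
Your proposal is correct and follows essentially the same route as the paper's own proof: stage-by-stage accounting of LDT construction (Lemma~\ref{lem:LDT-slow-round-construction}), ranking and the $O((n'\log n')/\log I)$ permutation broadcasts (Lemma~\ref{lem:LDT-operation}), and the final \AuxMISone{} call (Lemma~\ref{lem:awakeDistributedGreedyMIS}), with the same correctness argument that the broadcast of a uniformly random permutation indexed by rank yields a uniformly random node ordering. The subtleties you flag (the broadcast count/padding bookkeeping and the uniqueness of the new IDs) are handled exactly as in the paper, so nothing is missing.
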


\begin{proof}
First, note that \AuxMIStwo{} implements (sequential) randomized greedy MIS. Indeed, constructing the LDT, computing $n'$ and sending down smaller IDs is simply equivalent to computing a uniformly random node ordering on the $n'$ nodes. After which, \AuxMISone{} implements sequential greedy MIS with respect to that order by Lemma \ref{lem:awakeDistributedGreedyMIS}.

Now we prove the rest of the lemma statement. Constructing the LDT takes $O(\log n')$ awake complexity, $O(n' (\log n') \log^4 I)$ round complexity and $O(\log I)$ bit messages (by Lemma \ref{lem:LDT-slow-round-construction}). The ranking operation takes $O(1)$ awake complexity, $O(n')$ round complexity, and $O(\log I)$ bit messages (by Lemma~\ref{lem:LDT-operation}).  The $O((n' \log n')/ \log I)$ consecutive broadcasts used to transmit the permutation chosen by the root (at most $O(n' \log n')$ bits), via messages of $O(\log I)$ bits, take altogether $O((n' \log n')/ \log I)$ awake complexity and $O(((n')^2 \log n')/ \log I)$ round complexity (by Lemma~\ref{lem:LDT-operation}).
Finally, computing the LFMIS (via \AuxMISone{}) with respect to the new node ordering (from the new IDs in $[1,n']$) takes $O(\log n')$ awake complexity, $O(n')$ round complexity, and $O(\log n') = O(\log I)$ bit messages. 
\end{proof}

By replacing the awake efficient LDT construction (Algorithm \LDTCONSTRUCTAWAKE{}) with the round efficient one (Algorithm \LDTCONSTRUCTROUND{}, whose properties are described in Lemma \ref{lem:ldt-fast-round-construction}), we obtain a round efficient version of \AuxMIStwo{}, which we call \AuxMIStwoROUND{}. 
Its properties --- a faster round complexity but larger awake complexity than \AuxMIStwo{} --- are formally stated in the following corollary.

\begin{corollary}
\label{cor:auxiliaryMISLemma}
\AuxMIStwoROUND{} computes an LFMIS with respect to some uniformly random node ordering with $O((\log n') \log^* I + (n' \log n')/ \log I)$ awake complexity, $O((n' \log n') \log^* I + ((n')^2 \log n')/ \log I)$ round complexity and $O(\log I)$ bit messages.
\end{corollary}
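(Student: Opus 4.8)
The plan is to mirror the proof of Lemma~\ref{lem:auxiliaryMISLemma} essentially verbatim, since \AuxMIStwoROUND{} is defined to be identical to \AuxMIStwo{} except that the LDT is built with Algorithm \LDTCONSTRUCTROUND{} (whose costs are given by Lemma~\ref{lem:ldt-fast-round-construction}) in place of Algorithm \LDTCONSTRUCTAWAKE{} (Lemma~\ref{lem:LDT-slow-round-construction}). The correctness argument carries over unchanged: by Lemma~\ref{lem:ldt-fast-round-construction}, \LDTCONSTRUCTROUND{} still deterministically constructs a valid LDT over the connected subgraph (using the common bound $n'$), so the composite procedure of building the LDT, computing the exact count, and broadcasting a uniformly random permutation of $[1,n'']$ down the tree is again equivalent to imposing a uniformly random node ordering. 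Running \AuxMISone{} on the resulting small IDs then implements sequential greedy MIS with respect to that ordering by Lemma~\ref{lem:awakeDistributedGreedyMIS}, so the output is an LFMIS with respect to a uniformly random ordering.

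For the complexity, I would sum the four phases exactly as in Lemma~\ref{lem:auxiliaryMISLemma}, substituting the new LDT construction bounds. The LDT construction now contributes $O((\log n')\log^* I)$ awake and $O(n'(\log n')\log^* I)$ round complexity. The ranking operation contributes $O(1)$ awake and $O(n')$ round complexity (Lemma~\ref{lem:LDT-operation}); the $O((n'\log n')/\log I)$ consecutive broadcasts of the permutation contribute $O((n'\log n')/\log I)$ awake and $O(((n')^2\log n')/\log I)$ round complexity (Lemma~\ref{lem:LDT-operation}); and the final \AuxMISone{} call contributes $O(\log n')$ awake and $O(n')$ round complexity. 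Adding these and absorbing the dominated terms (noting $\log n' \leq (\log n')\log^* I$ and $n' \leq (n'\log n')\log^* I$ since $\log^* I \geq 1$) yields awake complexity $O((\log n')\log^* I + (n'\log n')/\log I)$ and round complexity $O((n'\log n')\log^* I + ((n')^2\log n')/\log I)$, as claimed. Every phase uses $O(\log I)$-bit messages, so the message bound is immediate.

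There is no genuine obstacle here, as the corollary is a direct consequence of swapping one black-box LDT subroutine for another. The only point worth checking carefully is that the term coming from the $\CONGEST$ bandwidth, namely the permutation-broadcast cost $O((n'\log n')/\log I)$ in the awake complexity and $O(((n')^2\log n')/\log I)$ in the round complexity, is left untouched by the substitution (it depends only on Lemma~\ref{lem:LDT-operation}, not on which construction algorithm is used). Hence these two terms appear identically in both \AuxMIStwo{} and \AuxMIStwoROUND{}, and only the LDT-construction terms differ between Lemma~\ref{lem:auxiliaryMISLemma} and the corollary.
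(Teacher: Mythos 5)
Your proposal is correct and follows exactly the paper's intended argument: the paper derives this corollary simply by noting that \AuxMIStwoROUND{} is \AuxMIStwo{} with \LDTCONSTRUCTAWAKE{} replaced by \LDTCONSTRUCTROUND{}, so the proof of Lemma~\ref{lem:auxiliaryMISLemma} goes through with the construction bounds of Lemma~\ref{lem:ldt-fast-round-construction} substituted in, while the ranking, permutation-broadcast, and \AuxMISone{} terms are unchanged. Your accounting of the four phases and absorption of dominated terms matches this precisely.
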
 

\section{Randomized Greedy MIS in \texorpdfstring{$O(\log \log n)$}{O(log log n)} Awake Rounds}\label{sec:MIS}

We present our main result: an $O(\log \log n)$ awake complexity randomized MIS algorithm. 
We first give a high-level description. Algorithm \AwakeMIS{} computes the lexicographically first MIS, with respect to some uniformly random ordering, in ``batches''. More precisely, the LFMIS is computed over the first $t$ nodes, then over the next $t'$ nodes, and so on. To (energy efficiently) ensure all batches know which nodes (of prior batches) are in the MIS, we coordinate communication using the virtual binary tree technique with only $O(\log \log n)$ awake complexity. Moreover, by batching nodes we can leverage the following ``graph shattering'' property: the subgraph induced by the yet undecided nodes within each batch can be decomposed into small $O(\log n)$-sized connected components. Finally, for each such component, it suffices to run \AuxMIStwo{} from Section \ref{sec:prelim} to compute the LFMIS with respect to a uniformly random ordering (of the component's nodes) in $O(\log \log n)$ awake time. 
\smallskip

\noindent {\bf Description of \AwakeMIS{}.} 
Let $\ell = \lceil \log n - \log \log n \rceil$ and $\Delta' = O(\log n)$ be some parameters decided in the analysis. The output variable $state$ takes value in $\{undecided, inMIS, notinMIS\}$. The MIS problem is said to be solved if all nodes have chosen a state in $\{inMIS, notinMIS\}$ and  the set of all nodes with the $inMIS$ output forms an MIS. 

Initially, each node starts in the ``undecided'' state. Moreover, each node $v \in V$ picks a pair $p(v) = (i,j) \in [1,\ell] \times [1,2\Delta']$ at random (but not uniformly) which decides what batch the node falls in---that batch is denoted by $B_{i,j}$. Batches are ordered via lexicographical order (of the pairs). Next, we describe precisely how nodes choose a batch. Each node chooses $i \in [1,\ell-1]$ with probability $(10 \cdot 2^i \log n) / n$ and $i = \ell$ with the remaining probability. Moreover, each node chooses $j \in [1,2\Delta']$ uniformly at random, that is, with probability $1/(2 \Delta')$.

After the batches have been decided, there are $2 \ell \Delta' = O(\log^2 n)$ phases. (The number of rounds per phase is determined in the analysis, allowing some nodes to sleep through the phase.) In each phase $(i,j) \in [1,\ell] \times [1,2\Delta']$, the first \emph{communication round} is used to update which of the batch's nodes have a neighbor in the MIS. (Let $g: [1,\ell] \times [1,2\Delta'] \mapsto [1,2 \ell \Delta']$ be the natural bijection that preserves lexicographic order.) In more detail, node $v \in V$ is awake if $g(i,j) \in S_{g(p(v))}([1,2 \ell \Delta'])$, and is asleep otherwise---see Subsection \ref{subsec:virtualBinaryTree} for the formal definition of the communication set $S_{g(p(v))}([1,2 \ell \Delta'])$ and Subsection \ref{subsec:auxMIS} for an example. Awake nodes send their state to their neighbors. At the end of the round, awake nodes that received a $inMIS$ message from a neighboring node set their state to $notInMIS$ (thus becoming decided).
After which, the remaining rounds are used by all undecided batch nodes (i.e., with $p(v) = (i,j)$ and in the ``undecided'' state) to execute \AuxMIStwo{} described in Subsection \ref{subsec:auxMIS}. (In the analysis, we show that w.h.p., the remaining rounds are sufficient for this algorithm to terminate.)

\begin{algorithm}[th]
\footnotesize
\caption{{\AwakeMIS{}} for node $v$}
\label{alg:AwakeMIS} 
\begin{algorithmic}[1]
\State \textbf{Input:} $n$
\State $v$ chooses $i \in [1,\ell]$ with probability $(10 \cdot 2^i \log n) / n$ and $j \in [1,2 \Delta']$ with probability $1/(2 \Delta')$ 
\State $p(v) := (i,j)$, $state_v := undecided$

\For{phase $p = 1$ to $2 \ell \Delta'$} 
    \Statex \Comment{For the first (communication) round:}
    \If{$p \in S_{g(p(v))}([1,2 \ell \Delta'])$} \hfill  \Comment{Nodes with $g(p(v)) \neq p$ can also participate}
        \If{$state_v = undecided$}
            \State Listen for one round
            \If{$v$ receives an ``in MIS'' message}
                ${}$\hspace{0.6em} $state_v := not in MIS$
            \EndIf
        \Else
            ${}$\hspace{0.9em} Send $state_v$ to all neighbors
        \EndIf
    \Else
        ${}$\hspace{0.9em} Sleep for one round
    \EndIf

    \Statex \Comment{For the next $O(\log^5 n \log \log n)$ rounds:}
    \If{$p \neq g(p(v))$ or $state_v \neq undecided$}
        ${}$\hspace{0.6em} Sleep for the remainder of the phase
    \Else
        ${}$\hspace{0.9em} $state_v = \AuxMIStwo{}()$
    \EndIf
\EndFor

\end{algorithmic}
\end{algorithm}

\noindent {\bf Analysis of \AwakeMIS{}.} Next, we show correctness and complexity bounds of \AwakeMIS{}. Note that we assume nodes know $n$ exactly (or at least some constant factor approximation) in the above description and below analysis for clarity. This assumption can be removed through straightforward changes, so that nodes can use instead a polynomial upper bound $N$ on $n$.

\begin{theorem}
\label{thm:MISTheorem}
MIS can be solved (w.h.p.) in $O(\log \log n)$ awake complexity and $O(\log^7 n \log \log n)$ round complexity in $\CONGEST$.
\end{theorem}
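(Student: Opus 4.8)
The plan is to prove correctness and the two complexity bounds separately, with correctness resting on the claim that \AwakeMIS{} faithfully implements sequential randomized greedy MIS with respect to a \emph{uniformly random} node ordering. First I would identify this ordering. Each node independently draws a collection index $i$, a batch index $j$, and (implicitly, through the random permutation used inside \AuxMIStwo{}) a within-batch rank, and we order nodes lexicographically by the triple $(i,j,\mathrm{rank})$. Since these triples are i.i.d.\ across nodes and the continuous rank component breaks all ties almost surely, a standard exchangeability argument shows the induced order is uniform over all $n!$ permutations, \emph{despite} the biased choice of $i$: i.i.d.\ keys that are almost surely distinct produce a uniform rank vector. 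With the order fixed, correctness follows from the \emph{composability property}. Processing batches in lexicographic order, where batch $(i,j)$ runs \AuxMIStwo{} to compute the LFMIS over its still-undecided nodes while treating already-committed MIS members as fixed, reconstructs exactly the LFMIS over all nodes for this order, which is a valid MIS. The remaining point is that ``undecided'' status crosses batch boundaries correctly: for adjacent $u,v$ with $g(p(u)) < g(p(v))$, Observation~\ref{obs:communicationSet} yields a phase $r \in (g(p(u)),g(p(v))]$ in which both are awake during the communication round, so $v$ hears whether $u$ joined the MIS before $v$'s own batch is processed (within-batch edges are instead handled inside \AuxMIStwo{}).

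The structural heart is controlling the residual graph so that each \AuxMIStwo{} call runs on tiny subgraphs. A Chernoff bound (Lemma~\ref{lem:chernoffBound}) applied to the independent index choices shows that, w.h.p., collection $i$ contains $\Theta(2^i \log n)$ nodes for $i<\ell$ and the last collection the remaining $\Theta(n)$ nodes; I would condition on these concentrated counts, so that the realized prefix sizes $t_{i-1} < t_i$ through collections $i-1$ and $i$ satisfy $t_i/t_{i-1} = O(1)$. Feeding these into the \emph{residual sparsity property} (Lemma~\ref{lem:sparsityLemmaForRandomOrder}), and using that the order is uniform, the subgraph $G[V_{t_i}\setminus N(M_{t_{i-1}})]$ of undecided collection-$i$ nodes has maximum degree $O(\log n)$ w.h.p.; I set $\Delta'$ to a constant times this bound. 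Crucially, this degree bound is a function only of the $i$-choices and the order within earlier collections, hence is independent of the $j$-choices of collection $i$. Therefore, after conditioning on it, the partition of collection $i$ by its uniform i.i.d.\ indices $j$ into $2\Delta'$ classes still meets the hypotheses of the simple graph shattering result (Lemma~\ref{lem:branchingProcess}), which gives that each batch's induced subgraph of undecided nodes decomposes into components of size $O(\log n)$ w.h.p.

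Given the shattering, both complexity bounds follow by substitution. For \emph{awake complexity}, Observation~\ref{obs:sizeCommunicationSet} bounds the number of communication rounds in which a node is awake by $|S_{g(p(v))}([1,2\ell\Delta'])| = O(\log(2\ell\Delta')) = O(\log\log n)$, and each node runs \AuxMIStwo{} in exactly one phase on a component of size $n' = O(\log n)$ whose IDs lie in $[1,I]$ with $I = \poly(n)$; plugging $\log n' = O(\log\log n)$ and $\log I = \Theta(\log n)$ into Lemma~\ref{lem:auxiliaryMISLemma} makes both the $O(\log n')$ term and the \CONGEST{} term $(n'\log n')/\log I$ equal to $O(\log\log n)$, for an overall $O(\log\log n)$ awake complexity. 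For \emph{round complexity}, the same substitution shows each \AuxMIStwo{} invocation fits within the $O(\log^5 n \log\log n)$ rounds allotted per phase; multiplying by the $2\ell\Delta' = O(\log^2 n)$ phases (one communication round per phase is lower order) yields $O(\log^7 n \log\log n)$ rounds. Finally I would choose $\epsilon = 1/\poly(n)$ in each application of Lemmas~\ref{lem:sparsityLemmaForRandomOrder} and~\ref{lem:branchingProcess} and union-bound over the $O(\log^2 n)$ collections and batches, keeping the total failure probability at $1/\poly(n)$.

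I expect the main obstacle to be the probabilistic bookkeeping underlying the second paragraph rather than any single lemma, all of which are already in hand. The delicate parts are: (i) justifying that the biased, independent batch assignment nonetheless yields a genuinely uniform order, so that Lemma~\ref{lem:sparsityLemmaForRandomOrder} is even applicable; (ii) passing from the random prefix sizes to fixed values via conditioning on the Chernoff-concentrated counts without invalidating the deferred-decisions argument in that lemma; and (iii) arguing that the residual-degree event and the $j$-partition are independent, so that sparsity and shattering compose cleanly. Making this conditioning precise---fixing the $i$-choices and the earlier-collection order before revealing the $j$-choices---while preserving every high-probability guarantee across all $O(\log^2 n)$ batches simultaneously is where the care is required.
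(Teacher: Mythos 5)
Your proposal is correct and takes essentially the same route as the paper: the same batching scheme, communication-set argument for cross-batch updates, composition of the residual sparsity property (Lemma~\ref{lem:sparsityLemmaForRandomOrder}) with the shattering lemma (Lemma~\ref{lem:branchingProcess}) via $\Delta' = \Theta(\log n)$, and the same substitution into Lemma~\ref{lem:auxiliaryMISLemma} for both complexity bounds. The only cosmetic difference is that you define the uniformly random ordering up front via exchangeable i.i.d.\ keys $(i,j,\mathrm{rank})$, whereas the paper constructs it incrementally by induction over phases, with an explicit interleaving argument to combine the per-component orders produced by \AuxMIStwo{} into a uniform batch order --- equivalent arguments, and your conditioning caveats (ii) and (iii) are exactly the points the paper handles by deferred decisions.
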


\begin{proof}
Let us start with some notations. For any $(i,j) \in [1,\ell] \times [1,2\Delta']$, denote by $V_{i,j}$ the union of all batches up to, and including, batch $(i,j)$, and for any $i \in [1,\ell]$, let $V_i = V_{i,2\Delta'}$. Then, we bound the size of $V_i$ for any $i \in [1,\ell]$. 
Recall that each node is in $V_i$ independently and with probability $\sum_{k=1}^i (10 \cdot 2^k \log n)/n = (10 (2^{i+1}-1) \log n)/n$. Hence, by linearity of expectation, $\mathbb{E}[|V_i|] = 10 (2^{i+1}-1) \log n$. After which, we apply the Chernoff bound (for the two tails, see Lemma \ref{lem:chernoffBound}) with $\delta = 1/2$: $\Pr[|V_i| \geq 15 (2^{i+1}-1) \log n] \leq \exp( - (10 (2^{i+1}-1) \log n)/10)$ and $\Pr[|V_i| \leq 5 (2^{i+1}-1) \log n] \leq \exp(- (10 (2^{i+1}-1) \log n)/8 )$. Thus, by union bound, $5 (2^{i+1}-1) \log n \leq |V_i| \leq 15 (2^{i+1}-1) \log n$ holds with probability at least $1-1/n^3$.

We show by induction that by the end of phase $(i,j) \in [1,\ell] \times [1,2\Delta']$, the algorithm has computed the LFMIS over $G[V_{i,j}]$ with respect to a uniformly random ordering of $V_{i,j}$ and with probability at least $1-1/n$. 
Consider the base case. During the first phase, different connected components $C_1,\ldots,C_k$ of $G[V_{1,1}]$ run independent \AuxMIStwo{} executions. Given the upper bound on $V_1$ shown above, these components are of size $O(\log n)$ with probability at least $1-1/n^3$. By setting the number of rounds within the first phase accordingly --- to $O(\log^5 n \log \log n)$ rounds --- all \AuxMIStwo{} executions terminate. Thus, by Lemma \ref{lem:auxiliaryMISLemma}, the output of \AuxMIStwo{} for each component $C_h$ is an LFMIS with respect to a uniformly random node ordering (of $C_h$), which we denote $M_h$. Note that $M = \bigcup_{h=1}^k M_h$ is exactly the set of nodes with state $inMIS$ when the first phase ends. Clearly, $M$ is an MIS since two nodes from different connected components are non-adjacent. 
It remains to show that $M$ is also a LFMIS with respect to a uniformly random ordering of $V_{1,1}$. Consider the following node ordering. First, pick some node $w$ uniformly at random among all $|V_{1,1}|$ nodes and then choose the first node of the ordering uniformly at random among all nodes of the component $w$ belongs to. Next, pick some node $w'$ uniformly at random among the remaining $|V_{1,1}|-1$ nodes and then choose the second node of the ordering uniformly at random among all nodes in the component $w'$ belongs to (excluding the first node in the ordering). Repeat this process until all nodes in $V_{1,1}$ have been ordered. It is straightforward to show that this node ordering is a uniformly random ordering of $V_{1,1}$. Moreover, two nodes $u,v$ from different components are not connected in $G[V_{1,1}]$ and thus whether $u$ is ordered before $v$ (or inversely) does not influence the resulting LFMIS over $G[V_{1,1}]$. Hence, $M$ is the LFMIS over $G[V_{1,1}]$ with respect to that uniformly random ordering of $V_{1,1}$, concluding the base case. 

Now, consider some phase $(i,j) \neq (1,1)$ and assume that the induction hypothesis holds for the (lexicographically) previous phase $(i',j')$. By the induction hypothesis, the algorithm has computed the LFMIS $M'$ over $G[V_{i',j'}]$ with respect to some uniformly random order of $V_{i',j'}$ by the end of phase $(i',j')$.
Since $V_{i',j'} = V_{i,j} \setminus B_{i,j}$, the induction step follows if we show that the algorithm computes a LFMIS over $G[B_{i,j}^*]$ with respect to a uniformly random ordering of $B_{i,j}^*$, where $B_{i,j}^*$ are the nodes of $B_{i,j}$ with no neighbors in $M'$.
To do so, we first note that nodes know whether they are in $B_{i,j}^*$ or not by the end of the communication round of phase $(i,j)$. This can be shown using properties of the communication sets (as in the proof of Lemma \ref{lem:awakeDistributedGreedyMIS}).
In the remainder of the phase, different connected components $C_1,\ldots,C_{k'}$ of $G[B_{i,j}^*]$ run independent \AuxMIStwo{} executions. If $i = 1$, then just as in the base case, these connected components have size $O(\log n)$ with probability at least $1-1/n^3$, and thus $O(\log^5 n \log \log n)$ rounds are sufficient for the \AuxMIStwo{} executions to terminate with probability at least $1-1/n^3$. Otherwise, if $i > 1$, let $M_{i-1}$ be the MIS computed by the end of phase $(i-1,2\Delta')$. Due to the induction hypothesis, we can apply Lemma \ref{lem:sparsityLemmaForRandomOrder} and thus $G[V_i \setminus N(M_{i-1})]$ has maximum degree upper bounded by $\frac{|V_i|}{|V_{i-1}|}  \ln (n^4)$ with probability at least $1 - 1/n^3$. Using the above bounds on $V_i$ and $V_{i-1}$, this is at most $9 \ln (n^4)$. By choosing $\Delta'= 9 \ln (n^4)$ (and using the principle of deferred decisions), Lemma \ref{lem:branchingProcess} implies that $G[B_{i_j} \setminus N(M_{i-1})]$ consists of small $O(\log n)$-sized connected components with probability at least $1-1/n^3$. Given that $B_{i,j}^* \subseteq B_{i_j} \setminus N(M_{i-1})$, $G[B_{i_j}^*]$ also consists of small $O(\log n)$-sized connected components with probability at least $1-1/n^3$. And thus, $O(\log^5 n \log \log n)$ rounds are sufficient for the \AuxMIStwo{} executions to terminate with probability at least $1-1/n^3$. 
Next, we point out that one can show, using an argument similar to that of the base case, that the union $M''$ of the computed MIS---which is exactly the set of nodes whose state becomes $inMIS$ during phase $(i,j)$---is a LFMIS with respect to a uniformly random order of $B_{i,j}^*$. Moreover, it is straightforward to extend this ordering of $B_{i,j}^*$ to a uniformly random order of $V_{i,j}$, as long as all nodes in $B_{i,j}$ are ordered after those in $V_{i',j'}$. (In which case, no matter how some node $z \in B_{i,j} \setminus B_{i,j}^*$ is ordered, $z$ is not in the LFMIS.) Consequently, when phase $(i,j)$ ends, the set $M' \cup M''$ is a LFMIS over $G[V_{i,j}]$ with respect to some uniformly random order of $V_{i,j}$ with probability at least $1-1/n^2$ and the induction step follows. (Note that the error probability of all induction steps added together is at most $1/n$.)

To conclude, we note that all communication is done through $O(\log n)$ bit messages, and in particular the communication within the \AuxMIStwo{} calls. Next, we upper bound the awake complexity. Each node $v \in V$ is awake for at most $O(\log \log n)$ communication rounds over all $O(\log^2 n)$ phases. Moreover, within $v$'s chosen phase $p(v)$, recall that the subgraph induced by the awake nodes is composed of $O(\log n)$-sized connected components with high probability. Then, by Lemma \ref{lem:auxiliaryMISLemma}, $v$ is awake for at most $O(\log \log n + ((\log n) \log \log n)/ \log n) = O(\log \log n)$ rounds (w.h.p.) during the \AuxMIStwo{} execution. 
Finally, a simple modification of \AuxMIStwo{} limiting the number of rounds a node can be awake to $O(\log \log n)$ (where the precise value can be obtained from the analysis and the desired error probability) implies any failure affects correctness rather than the awake complexity. Therefore, the awake complexity of \AwakeMIS{} is (deterministically) upper bounded by $O(\log \log n)$. 
As for the round complexity, the above analysis implies that each phase takes $O(\log^5 n \log \log n)$ rounds. Hence, the algorithm's round complexity is $O(\log^7 n \log \log n)$.
\end{proof}

By using the more round efficient \AuxMIStwoROUND{} (see Corollary \ref{cor:auxiliaryMISLemma}) instead of \AuxMIStwo{}, an MIS can be computed with a  better round complexity, but at the cost of a small $O(\log^* n)$ overhead to the awake complexity. 

\begin{corollary}
MIS can be solved (w.h.p.) in $O((\log \log n) \log^* n)$ awake complexity and \\ $O((\log^3 n) (\log \log n) \log^* n)$ round complexity in $\CONGEST$.
\end{corollary}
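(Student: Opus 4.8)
The plan is to reuse the entire machinery of the proof of Theorem~\ref{thm:MISTheorem} essentially verbatim, changing only the subroutine used to compute the LFMIS within each batch's connected components. Concretely, everywhere in \AwakeMIS{} where a node invokes \AuxMIStwo{}, I would instead invoke \AuxMIStwoROUND{}, whose complexity guarantees are recorded in Corollary~\ref{cor:auxiliaryMISLemma}. The correctness argument is completely unaffected: \AuxMIStwoROUND{} also computes an LFMIS with respect to a uniformly random node ordering of its connected component (this is exactly the first sentence of Corollary~\ref{cor:auxiliaryMISLemma}), so the inductive claim that after phase $(i,j)$ the algorithm has computed the LFMIS over $G[V_{i,j}]$ with respect to some uniformly random ordering carries over with no modification. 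The batching, the residual-sparsity application of Lemma~\ref{lem:sparsityLemmaForRandomOrder}, the choice $\Delta' = 9\ln(n^4)$, and the graph-shattering argument via Lemma~\ref{lem:branchingProcess} are all orthogonal to which MIS subroutine is run, and so they remain intact.

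The only thing to recompute is the effect on the two complexity measures, and here I would simply plug the component size $n' = O(\log n)$ and the ID bound $I = \poly(n)$ (so $\log I = \Theta(\log n)$) into the bounds of Corollary~\ref{cor:auxiliaryMISLemma}. For the awake complexity, \AuxMIStwoROUND{} on a component contributes
\[
O\!\left((\log n')\log^* I + \frac{n'\log n'}{\log I}\right)
= O\!\left((\log\log n)\log^* n + \frac{(\log n)(\log\log n)}{\log n}\right)
= O((\log\log n)\log^* n).
\]
Adding the $O(\log\log n)$ awake rounds spent on communication rounds across all $O(\log^2 n)$ phases leaves the total awake complexity at $O((\log\log n)\log^* n)$, matching the corollary's claim. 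For the round complexity, \AuxMIStwoROUND{} on a component takes
\[
O\!\left((n'\log n')\log^* I + \frac{(n')^2\log n'}{\log I}\right)
= O\!\left((\log n)(\log\log n)\log^* n + \frac{(\log^2 n)(\log\log n)}{\log n}\right)
= O((\log n)(\log\log n)\log^* n)
\]
rounds per phase, and multiplying by the $O(\log^2 n)$ phases yields the claimed $O((\log^3 n)(\log\log n)\log^* n)$ round complexity.

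There is genuinely no hard step here; the entire content is the observation that the subroutine swap is correctness-preserving and the bookkeeping of substituting the new subroutine costs into the per-phase budget. The one point that requires a brief sanity check — and the closest thing to an obstacle — is confirming that the number of rounds allotted per phase is still adequate, i.e.\ that the per-phase budget (which was $O(\log^5 n\log\log n)$ in the main theorem) is enlarged to the $O((\log n)(\log\log n)\log^* n)$ dictated by \AuxMIStwoROUND{} so that, with high probability, all component-level \AuxMIStwoROUND{} executions terminate within the phase. Since the shattering guarantee (components of size $O(\log n)$ w.h.p.) is inherited unchanged, this termination-within-budget holds with the same $1 - 1/n^3$ per-phase failure probability, and the union bound over the $O(\log^2 n)$ phases again yields an overall w.h.p.\ correctness guarantee, exactly as in Theorem~\ref{thm:MISTheorem}.
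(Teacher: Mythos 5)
Your proposal is correct and follows essentially the same route as the paper, which likewise proves the corollary by substituting \AuxMIStwoROUND{} for \AuxMIStwo{} and noting (via Corollary~\ref{cor:auxiliaryMISLemma}) that phases of $O((\log n)(\log\log n)\log^* n)$ rounds then suffice. Your substitution of $n' = O(\log n)$ and $\log I = \Theta(\log n)$ into the corollary's bounds, and the observation that correctness is untouched since \AuxMIStwoROUND{} still outputs an LFMIS with respect to a uniformly random ordering, is exactly the bookkeeping the paper leaves implicit.
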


\begin{proof}
When using \AuxMIStwoROUND{} instead of \AuxMIStwo{}, phases of $O((\log n) (\log \log n) \log^* n)$ rounds suffice (see Corollary \ref{cor:auxiliaryMISLemma}).
\end{proof}

\section{Conclusion}\label{sec:conclusion}
In this paper, we  show that the fundamental MIS problem on general graphs can be solved in $O(\log \log n)$ awake complexity, i.e., the worst-case number of awake (non-sleeping) rounds taken by all nodes is $O(\log \log n)$. This is the first such result that we are aware of where we can obtain even a $o(\log n)$ bound on the awake  complexity for MIS. A long-standing open question is whether a similar bound (i.e., $o(\log n)$) can be shown for the round complexity.

Several open problems arise from our work. An important one is determining whether one can improve the awake complexity bound of $O(\log \log n)$, or showing that is optimal by showing a lower bound. Another one is whether one can obtain an $O(\log \log n)$ awake complexity MIS algorithm that has $O(\log n)$ round complexity. More generally, can one obtain good trade-offs between awake and round complexity of MIS?

Finally, it would be useful to design algorithms for other symmetry breaking problems such as maximal matching, coloring, etc., that have better awake complexity compared to the traditional round complexity.

\bibliographystyle{plainurl}
\bibliography{references,reference,biblio}


\newpage
\appendix
\section{Labeled Distance Tree (LDT) and a Faster Round  Complexity}
\label{sec:app-ldt}
In this section, we describe a useful data structure called a labeled distance tree (LDT), introduced in~\cite{AMP22}. We first describe the structure and relevant procedures needed to construct it in Section~\ref{app-subsec:ldt-explain-and-procedures}. Subsequently, we give an informal description on how one might construct an LDT in Section~\ref{app-subsec:construct-ldt}, i.e., we describe algorithm $\LDTCONSTRUCTROUND$. Finally, in Section~\ref{app-subsec:ldt-procedures} we give a few useful procedures that can be run once an LDT is constructed.

\subsection{LDT Description and Relevant Procedures}
\label{app-subsec:ldt-explain-and-procedures}
For a given graph, a \textit{labeled distance tree (LDT)} is a spanning tree of that graph such that (i) all nodes in the tree know the ID of the root of the tree, called the ID of the LDT, (ii) each node knows its depth in the tree (i.e., the hop-distance from itself to the root of the tree via tree edges), and (iii) each node knows the IDs of its parent and children, if any, in the tree. If a given graph can be partitioned into a disjoint set of such LDTs, we refer to that graph as a \textit{forest of labeled distance trees (FLDT)}.

In order to construct an LDT over a given graph, we start from a situation where all nodes are considered LDTs of their own (i.e., the overall graph is an FLDT) and we successively merge LDTs together in phases until we arrive at a situation where all nodes in the graph belong to the same LDT. In the course of this process, we utilize several simple procedures (e.g., upcast, broadcast)
that are modified to be efficient in awake complexity. We first describe those procedures before explaining how to construct an LDT in the next section.

For the procedures described below, it is assumed that the initial graph has already been divided into an FLDT where each node $u$ knows the ID of the root, $root$, of the LDT it belongs to, $u$'s distance to $root$ within the LDT, as well as $u$'s parent and children, if any, in the LDT it belongs to. First of all, we define a transmission schedule that is used in each of the procedures and will be directly utilized in the algorithms. Then we describe the procedures themselves.

\begin{sloppypar}
~\\ \noindent \textbf{Transmission schedule of nodes in an LDT.}
Consider an LDT rooted at the node $root$ and a node $u$ in that tree {\em at distance $i$ from the root}. Let $n$ be an upper bound on the number of nodes in the LDT. We describe a transmission schedule   and an upper bound on the number of nodes in the LDT $n$. The transmission schedule $\TS(root, u, n)$ assigns a set of rounds to $u$ to be awake in from a block of $2n+1$ possible rounds. For ease of explanation, we assign names to each of these rounds as well. For all non-root nodes $u$, the set of rounds that $\TS(root,u)$ assigns to $u$ includes rounds $i,i+1, n+1, 2n-i+1$, and $2n-i+2$ with corresponding names $\DOWNRECEIVE, \DOWNSEND, \SIDESENDRECEIVE, \UPRECEIVE$, and $\UPSEND$, respectively. $\TS(root,root,n)$ assigns to $root$ the set containing only the rounds $1$, $n+1$, and $2n+1$ with names $\DOWNSEND$, $\SIDESENDRECEIVE$, and $\UPRECEIVE$, respectively.\footnote{We assumed that $\TS(\cdot, \cdot, n)$ was started in round $1$ when assigning rounds. If $\TS(\cdot, \cdot, n)$ is started in round $r$, then the correct round numbers can be obtained by adding $r-1$ to the values mentioned in the description.} 
\end{sloppypar}

This transmission schedule can be used to modify typical procedures on a tree (e.g., upcast, broadcast) to procedures that have small awake complexity. During one instance of $\TS(root, u, n)$, by having each node wake up in a carefully selected non-empty subset of its at most $5$ named rounds, we guarantee that all nodes in the LDT have woken up at least once and that information is propagated in the correct ``direction'' in the LDT as needed. We name useful procedures to construct an LDT and give guarantees on these procedures below.

~\\ \noindent \textbf{Broadcasting a message in an LDT.}
Procedure $\DOWNCAST(n)$, run by all nodes in a given LDT, allows the root node of that LDT to transmit a given message to all nodes in the LDT in $O(1)$ awake complexity and $O(n)$ round complexity.

~\\ \noindent \textbf{Upcasting the minimum value in an LDT.}
Procedure $\UPCASTMIN(n)$, run by all nodes in a given LDT, allows the smallest value among all values held by the nodes of that LDT, to be propagated to the root of the tree in $O(1)$ awake complexity and $O(n)$ round complexity. 

~\\ \noindent \textbf{Transmitting a message between nodes of adjacent LDTs.}
Procedure $\TRANSMITADJACENT(n)$, run by all nodes in an LDT, allows each node in the LDT to transfer a message, if any, to neighboring nodes belonging to other LDTs in $O(1)$ awake complexity and $O(n)$ round complexity.

\subsection{Construction of an LDT}
\label{app-subsec:construct-ldt}
\begin{sloppypar}
In this section, we describe how to deterministically construct an LDT in algorithm $\LDTCONSTRUCTROUND$. We note that the original process~\cite{AMP22} was designed to construct a minimum spanning tree while simultaneously constructing an LDT over the original graph. We describe a simplified version of the process that focuses on just constructing an LDT over the original graph.
\end{sloppypar}

~\\ \noindent \textbf{Algorithm.}
At a high level, we first spend a single round so that each node is aware of the IDs of each of its neighbors and each edge can be assigned a unique ID known to both endpoints (the ID can be the composition of the edge's endpoints' IDs in increasing order). Subsequently, we run a version of the classical GHS algorithm, modified to ensure that the awake complexity of nodes is small and to also ensure that at the beginning of each phase of the algorithm, an FLDT is maintained. Initially, each node is considered to be its own LDT (and so the overall graph is an FLDT) and we describe how to merge LDTs together in $O(\log n)$ phases until all nodes in the graph belong to the same LDT.

In each phase, we perform the following sequence of steps in three stages. Recall that at the beginning of the phase, we have a forest of LDTs. 
We explain each phase of the algorithm from a bird's eye perspective with the details of what each LDT does as bullet points. 
\begin{enumerate}
    \item \textbf{Stage 1.} Each LDT finds its ``minimum'' outgoing edge and both nodes of an outgoing edge are made aware of it. (Since the edges are unweighted, this is just the outgoing edge whose ID is smallest among all outgoing edges.) Also, all nodes of an a given LDT should know both the name of the outgoing edge and the ID of the LDT that edge leads to.
    \begin{enumerate}
        \item (The nodes in) each LDT run procedure $\UPCASTMIN(n)$ to collect the ID of the ``minimum'' outgoing edge at the root of the LDT.
        \item Each LDT runs procedure $\DOWNCAST(n)$ to let all nodes in the LDT know the ID of the chosen outgoing edge.
        \item Each LDT $L$ runs $\TRANSMITADJACENT(n)$ to let nodes from other LDTs know if they are part of the chosen outgoing edge from $L$.
    \end{enumerate}
    
    \item \textbf{Stage 2.} Consider the supergraph $H$ where each LDT is a node and the outgoing edges identified in the previous step are the edges. Denote the connected components in $H$ as $C_1$, $C_2$, $\ldots$, $C_p$. Our final goal is to decompose the connected components in $H$ in to a forest $F$ of small-depth trees $SDT_1$, $SDT_2$, $\ldots$, $SDT_k$.\footnote{Note that each tree $SDT_i$ is itself a supergraph consisting of LDTs as nodes and edges between LDTs forming the edges in the supergraph. That is, both $F$ and $H$ are supergraphs with the same set of nodes but possibly different edges. Also note that the number of small-depth trees in $F$, $k$, may be different than the number of connected components in $H$, $p$.} This is done as follows. For each connected component $C_i$ in $H$, the LDTs that make up $C_i$ work together to construct a spanning tree $T_i$ on top of $C_i$. Then the LDTs in each $T_i$ simulate a Cole-Vishkin style coloring of $T_i$ and use this coloring to then perform a maximal matching. Each edge of this maximal matching is added to $F$. Finally, after the maximal matching, if there exist isolated LDTs (i.e., unmatched LDTs), those LDTs add their outgoing edge to their parent LDT to $F$. If the isolated LDT is a root of some $T_i$, then it will not have a parent. In this case, this isolated LDT chooses an outgoing edge to one of its children in $T_i$.
    \begin{enumerate}
        \item Notice that in each connected component $C_i$ of the supergraph $H$, there will exist \textit{exactly} two LDTs with outgoing edges into each other. The LDT with the smaller ID becomes the root of the tree $T_i$ for that connected component and the other LDT becomes its child. By having all LDTs use procedures $\UPCASTMIN(n)$ and $\DOWNCAST(n)$ a constant number of times, the nodes in both of these LDTs identify that they belong to such LDTs in the connected component and can identify if they are in the LDT that forms the root of the tree in $H$. 
        \item Subsequently, each remaining LDT in $C_i$ identifies its outgoing edge as leading to its parent in the spanning tree $T_i$. This does not require communication between nodes as this is just local computation performed by the roots of the LDTs.
        \item Now, a Cole-Vishkin style coloring algorithm to 6-color the LDTs of each $T_i$ (i.e., each LDT in $T_i$ is colored a single color in $[1,6]$, known to all nodes within that LDT) can be easily simulated. Each round of the $O(\log^* n)$ algorithm consists of some local computation and then having each node communicate with each of its children. One round of local computation can be simulated by the roots of each of the LDTs in one round. One round of communication can be simulated in $O(1)$ awake complexity and $O(n)$ round complexity via a constant number of uses of the procedures $\UPCASTMIN(n)$, $\DOWNCAST(n)$, and $\TRANSMITADJACENT(n)$.
        \item Previously, a spanning tree was constructed over each connected component of the supergraph $H$ and each of the LDTs, constituting nodes in this spanning tree, were colored in via a 6-coloring. Now, in order to obtain a maximal matching on these components, the following process is run for $6$ phases. We maintain the invariant that at the beginning of each phase, each LDT (all nodes belonging to that LDT) knows if it is matched or not and for every inter-LDT edge, both nodes of that edge know the matching status of both LDTs. In phase $i$, each unmatched LDT of color $i$ chooses one of its unmatched children, if any, arbitrarily (say by performing procedure $\UPCASTMIN(n)$ to return the smallest inter-LDT edge among edges leading to unmatched children). Subsequently the LDT informs the child of being matched to it and informs all adjacent LDTs that it is no longer unmatched. Each phase can be simulated through a constant number of uses of $\UPCASTMIN(n)$, $\DOWNCAST(n)$, and $\TRANSMITADJACENT(n)$. 
        \item Each LDT that remains unmatched in $T_i$ (except the root LDT of $T_i$) now informs its parent LDT in $T_i$ that the inter-LDT edge between them also belongs to $F$. This can be done through a constant number of uses of $\UPCASTMIN(n)$, $\DOWNCAST(n)$, and $\TRANSMITADJACENT(n)$. 
        \item Finally, if the parent LDT in $T_i$ is unmatched, it chooses one of its children in $T_i$ and informs that child (really the node within that LDT at the other end of the edge) that the inter-LDT edge between them is also in $F$. This can be done through a constant number of uses of $\UPCASTMIN(n)$, $\DOWNCAST(n)$, and $\TRANSMITADJACENT(n)$. 
    \end{enumerate}
    
    \item \textbf{Stage 3.} At the end of the previous stage, a forest $F$ of small-depth trees $SDT_i$ was formed. (These trees consist of the LDTs and any edges added to $F$. While all nodes within an LDT may not be aware of these edges in $F$, the endpoints (nodes within LDTs) of every such edge are aware of it.) The final part of each phase consists of merging together the LDTs in each small-depth tree $SDT_i$ into one large LDT. Care must be taken to ensure that each node of a resulting merged LDT has the correct ID for that LDT (i.e., the ID of the root of the LDT) and furthermore, each node in the merged LDT maintains the correct distance-from-root value. Additionally, each node may need to be re-oriented, i.e., each node may need to update information about who its parent and children in the LDT are.
    \begin{enumerate}
        \item Notice that every tree $SDT_i$ in $F$ is of diameter at most $4$ (i.e., any two LDTs in the same tree $SDT_i$ in $F$ are at most distance $4$ apart).\footnote{To see why this is true, notice how the edges were added to $F$ to construct each $SDT_i$. First, edges were added from a maximal matching. This created small-depth trees of diameter $1$. Now, for each tree $T_i$, isolated (non-root) LDTs added edges to their parents. This could result in small-depth trees of diameter $3$. Finally, for each tree $T_i$, if the root LDT is isolated, it could add an edge to one of its children. This could increase the diameter of the resulting small-depth tree by $1$.} First, for each $SDT_i$, we choose its constituent LDT with the smallest ID to be the core of the merged LDT, around which we re-orient the nodes of the other LDTs. It is easy to see that, for any tree $SDT_i$ in $F$, the smallest LDT ID in $SDT_i$ can be progagated to every LDT in $SDT_i$ through a constant number of instances of  $\UPCASTMIN(n)$, $\DOWNCAST(n)$, and $\TRANSMITADJACENT(n)$. 
        \item Now, all nodes within a given tree $SDT_i$ in $F$ know the ID that will become the ID of the final merged LDT. Let LDT $L$ have this ID. Consider an LDT $L'$ that is adjacent to the LDT $L$ in $SDT_i$. The nodes in $L'$ must re-align themselves and update their distance-to-root values in the merged LDT consisting of $L$ and $L'$. Suppose the edge $(u,v)$ is the inter-LDT edge between $L$ and $L'$ such that $u \in L$ and $v \in L'$. Now, the nodes in $L'$ update their values by utilizing two instances of a transmission schedule parameterized by $n$. Recall that $v$ is aware of $u$'s distance-to-root and so knows its own distance-to-root. In the first instance of the transmission schedule the nodes in the branch from $v$ to the root of $L'$ update their  distance-to-root values and re-orient themselves using the $\UPSEND$ and $\UPRECEIVE$ rounds. In the second instance, the remaining nodes in $L'$ update their values in the $\DOWNSEND$ and $\DOWNRECEIVE$ rounds and can re-orient themselves. This process allows all LDTs at distance $1$ from LDT $L$ in $SDT_i$ to update their values. By running this process $4$ times, we guarantee that all LDTs up to and including distance $4$ from LDT $L$ in $SDT_i$ can update their values and re-orient themselves.
    \end{enumerate}
\end{enumerate}

~\\ \noindent \textbf{Analysis.} We briefly analyze the algorithm above to give the intuition of correctness and complexities of the construction. We note that the full analysis was already given in~\cite{AMP22}. The correctness comes from the fact that we maintain a forest of LDTs at the beginning of each phase. The construction of the forest $F$ in the final part of each phase and the merging of LDTs in $F$ guarantees that a constant number of the LDTs are merged together in each phase. To see this, notice that, in every phase, every tree $SDT_i$ in $F$ consists of at least two LDTs, resulting in at least a constant fraction of the LDTs ``disappearing'' in the phase (but really just being merged into one another).\footnote{Notice that each tree $SDT_i$ in $F$ is formed from the LDTs in $H$ as well as edges added to $F$ in stage two. In stage two, every LDT in each tree $T_i$ is either unmatched or matched with another LDT. Each LDT that is matched ensures that the small-depth tree its belongs to comprises of at least two LDTs. For each LDT that remained unmatched, it is still guaranteed to merge with another LDT because it will either add an edge to its parent in $T_i$ to $F$ (if it is not the root of $T_i$) or else add an edge to one of its children in $T_i$ to $F$ (if it is the root of $T_i$).} As a result, after $O(\log n)$ phases, all LDTs are merged together into one single LDT.

Regarding the awake complexity and round complexity, we see that the first and third stage each use a constant number of transmission schedules parameterized by $n$, as well as a constant number of calls to $\UPCASTMIN(n)$, $\DOWNCAST(n)$, and $\TRANSMITADJACENT(n)$. Totally, all of these contribute $O(1)$ awake complexity and $O(n)$ round complexity. However, in the second stage, the process of simulating the Cole-Vishkin style coloring takes a total of $O(\log^* n)$ awake complexity and $O(n \log^* n)$ round complexity. Since, there are a total of $O(\log n)$ phases, we see that the total awake complexity is $O((\log n) \log^* n)$ and the total round complexity is $O(n (\log n) \log^* n)$.

The following lemma captures the above guarantees.

\begin{lemma}
For a given graph of at most $n$ nodes, where $n$ is known to all the nodes, $\LDTCONSTRUCTROUND$ deterministically constructs an LDT over the given graph in the $\CONGEST$ setting in $O((\log n) \log^* n)$ awake complexity and $O(n (\log n) \log^* n)$ round complexity.
\end{lemma}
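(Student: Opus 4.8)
The plan is to prove the three claims — correctness, the $O((\log n)\log^* n)$ awake bound, and the $O(n(\log n)\log^* n)$ round bound — by induction on the $O(\log n)$ phases, taking as the induction hypothesis the invariant that at the start of each phase the graph is partitioned into a \emph{forest of LDTs}: every node knows the ID of the root of its current LDT, its distance-to-root within that tree, and its parent and children. The base case is immediate, since initially each node forms its own singleton LDT. Assuming the invariant at the start of a phase, I would argue that the three stages preserve it while strictly reducing the number of LDTs, and then charge the per-phase cost.

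For correctness within a phase, first observe that Stage~1 correctly delivers, to every node of each LDT $L$, the name of $L$'s minimum outgoing inter-LDT edge together with the ID of the LDT it leads to: $\UPCASTMIN(n)$ collects the smallest such edge ID at the root, $\DOWNCAST(n)$ disseminates it, and $\TRANSMITADJACENT(n)$ informs the far endpoint — each correct by the guarantees stated in Section~\ref{app-subsec:ldt-explain-and-procedures}. The key structural fact to invoke for Stage~2 is the standard GHS observation that, since edges are unweighted with distinct IDs, every connected component $C_i$ of the supergraph $H$ contains \emph{exactly one} pair of LDTs whose chosen minimum outgoing edges point at each other; taking the smaller-ID LDT of this pair as the root and orienting every other LDT's chosen edge toward its parent yields a well-defined spanning tree $T_i$ of $C_i$. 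I would then verify that the simulated Cole--Vishkin $6$-colouring, the $6$-phase maximal matching, and the subsequent edge-additions produce the forest $F$, and that (as argued in the Stage~2 and Stage~3 footnotes) every resulting small-depth tree $SDT_i$ both has diameter at most $4$ and contains at least two LDTs.

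The diameter-$4$ bound is exactly what makes Stage~3 correct and cheap: starting from the smallest-ID LDT $L$ chosen as the core, a single pair of transmission-schedule instances re-orients and updates the distance-to-root of every LDT at hop-distance $1$ from $L$ in $SDT_i$ — the branch from the boundary node $v$ up to its old root uses the $\UPSEND$/$\UPRECEIVE$ rounds, and the remaining nodes use the $\DOWNSEND$/$\DOWNRECEIVE$ rounds — so repeating this four times re-orients the entire $SDT_i$ into one valid LDT. Since each $SDT_i$ absorbs at least two old LDTs, the number of LDTs drops by at least a constant factor each phase; hence the invariant is re-established and after $O(\log n)$ phases a single LDT spans the whole graph, giving correctness.

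For the complexity I would charge each phase separately. Stages~1 and~3 use only a constant number of invocations of $\UPCASTMIN(n)$, $\DOWNCAST(n)$, $\TRANSMITADJACENT(n)$, and $\TS(\cdot,\cdot,n)$, each costing $O(1)$ awake and $O(n)$ rounds, so they contribute $O(1)$ awake and $O(n)$ rounds. The only super-constant cost is the Cole--Vishkin colouring in Stage~2, whose $O(\log^* n)$ rounds are each simulated by $O(1)$ procedure calls, giving $O(\log^* n)$ awake and $O(n\log^* n)$ rounds per phase. Multiplying by the $O(\log n)$ phases yields $O((\log n)\log^* n)$ awake and $O(n(\log n)\log^* n)$ round complexity. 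I expect the main obstacle to be the Stage~3 correctness argument: one must show carefully that the constant number of transmission-schedule instances leaves \emph{every} node of the merged tree with a consistent root ID, a correct distance-to-root, and correctly re-oriented parent/child pointers, despite different branches updating within the fixed $\TS$ round structure; the diameter-$\le 4$ guarantee is what keeps this to $O(1)$ schedule instances, but verifying that no node is left stale requires tracking the $\UPSEND$/$\UPRECEIVE$ and $\DOWNSEND$/$\DOWNRECEIVE$ round semantics precisely.
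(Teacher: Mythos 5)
Your proposal is correct and follows essentially the same route as the paper's own (brief) analysis: maintain the forest-of-LDTs invariant across phases, argue that every small-depth tree $SDT_i$ contains at least two LDTs so a constant fraction of LDTs merge per phase (hence $O(\log n)$ phases), and charge each phase $O(1)$ awake and $O(n)$ rounds for Stages~1 and~3 plus the dominating $O(\log^* n)$ awake and $O(n\log^* n)$ rounds for the simulated Cole--Vishkin colouring in Stage~2. You in fact supply more detail than the paper, which defers the full verification (including the Stage~3 re-orientation semantics you flag as delicate) to the analysis in~\cite{AMP22}.
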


\subsection{Useful Procedures}\label{app-subsec:ldt-procedures}

Once an LDT is constructed, we can then leverage it to run fast awake complexity procedures. In particular, we describe the operations of broadcast and ranking from Definition~\ref{def:ldt-ops}.

~\\ \noindent \textbf{Broadcasting in an LDT.}
We describe Procedure~$\LDTBROADCAST$, run by each node $v$ of an LDT, which takes the message $msg_r$ of the root of the LDT as input and results in all nodes of the LDT receiving $msg_r$. Consider an LDT of size at most $n'$, where the root of the LDT wants to broadcast a message of size at most $m_r$ bits to all nodes in the LDT. Assume that messages of size $O(m_r)$ can be sent over each edge. Broadcast can be performed by having all nodes in the LDT participate in one instance of a transmission schedule parameterized by $n'$, where each node receives this message in their $\DOWNRECEIVE$ round and propagates the message to its children in its $\DOWNSEND$ round. This takes $O(1)$ awake complexity and $O(n')$ round complexity.

~\\ \noindent \textbf{Ranking in an LDT.}
We describe Procedure~$\LDTRANK$, run by each node $v$ of an LDT, which calculates $v$'s rank in some overall total order of the nodes of the LDT, and additionally allows $v$ to learn the size of the LDT. (The given total ordering may be intuitively understood as the in-order ranking of a binary tree, when extended to an $n$-ary tree. For an $n$-ary tree, one possible generalization, that we use here, would be to recursively do the following: visit the leftmost subtree, then the root, then the remaining subtrees in some arbitrary order.) Consider an LDT of size at most $n'$. The procedure consists of two instances of a transmission schedule parameterized by $n'$. In the first instance of a transmission schedule, each node listens in its $\UPRECEIVE$ round for the number of nodes in each of its children's subtrees and remembers these values, and then in its $\UPSEND$ round it sends the sum of these values plus one to its parent. At the end of this instance of a transmission schedule, the root will know the total number of nodes in the tree $n''$. In the second instance of the transmission schedule, each node $v$ receives in its $\DOWNRECEIVE$ round the value of $n''$ and  a value from its parent, which $v$ uses to calculate its rank in a manner described below, and subsequently sends down the value of $n''$ and unique values to each of $v$'s children in its $\DOWNSEND$ round. For a given node $v$ that receives a value $x$ and has $k$ children $c_1, c_2, \ldots, c_k$ with corresponding number of nodes in their subtrees $n_1, n_2, \ldots, n_k$, do the following. We assume that the root of the LDT ``receives'' $x=0$.\footnote{Each node maintains an arbitrary internal ordering of its children in the LDT, if any.} Set $v$'s rank to $x+n_1$. During $v$'s $\DOWNSEND$ round, it sends down $x$ to $c_1$, and for $i \in [2,k]$, $v$ sends the value $1 + \sum_{j=1}^{i-1} n_j$ to $c_i$. This procedure takes $O(1)$ awake complexity and $O(n')$ round complexity.

The guarantees on constructing the LDT as well as the procedures described above are reflected in the following lemma.

\begin{lemma}\label{app:procedures-ldt-guarantees}
For any connected $V' \subseteq V$ of at most $n'$ nodes, where $n'$ is known to all nodes, with unique IDs in $[1,I]$:
\begin{itemize}
    \item An LDT over $G[V']$ can be constructed deterministically with $O(\log I)$ bit messages, with $O((\log n') \log^* I)$ awake complexity and $O(n' (\log n') \log^* I)$ round complexity.
    \item Broadcast over an LDT --- in which the LDT root $v_r$ has a message of size at most $m_r$ --- can be executed deterministically with $O(m_r)$ bit messages, with $O(1)$ awake complexity and $O(n')$ round complexity.
    \item Ranking over an LDT --- in which each node $v \in V'$ learns its rank in some total ordering of the tree and also learns $|V'|$ --- can be executed deterministically with $O(\log I)$ bit messages, with $O(1)$ awake complexity and $O(n')$ round complexity.
\end{itemize} 
\end{lemma}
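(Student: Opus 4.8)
The plan is to prove the three bullets of Lemma~\ref{app:procedures-ldt-guarantees} separately, each by invoking the explicit procedure and analysis already developed in Sections~\ref{app-subsec:construct-ldt} and~\ref{app-subsec:ldt-procedures}; no new machinery is needed beyond carefully tracking the dependence on the ID range $I$ (rather than on $n'$) in the message sizes and in the Cole--Vishkin step.

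For the construction bullet, I would run $\LDTCONSTRUCTROUND$ on $G[V']$ using the common bound $n'$. Correctness and the phase count $O(\log n')$ carry over verbatim from the analysis in Section~\ref{app-subsec:construct-ldt}, since that argument only relies on each small-depth tree $SDT_i$ merging at least two LDTs per phase. The sole change is bookkeeping: every transmitted message carries node IDs, edge IDs (concatenations of two node IDs), and distance-to-root values, each fitting in $O(\log I)$ bits, so $O(\log I)$-bit messages suffice. The only $\log^*$ contribution comes from the Cole--Vishkin coloring in Stage~2, which operates on the $O(\log I)$-bit LDT IDs and hence terminates in $O(\log^* I)$ iterations; each iteration, and each call to $\UPCASTMIN(n')$, $\DOWNCAST(n')$, and $\TRANSMITADJACENT(n')$, costs $O(1)$ awake and $O(n')$ round complexity. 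Thus each phase costs $O(\log^* I)$ awake and $O(n' \log^* I)$ rounds, and multiplying by the $O(\log n')$ phases yields $O((\log n') \log^* I)$ awake and $O(n' (\log n') \log^* I)$ round complexity.

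For broadcast and ranking I would invoke Procedures $\LDTBROADCAST$ and $\LDTRANK$ from Section~\ref{app-subsec:ldt-procedures}. Broadcast uses a single transmission schedule $\TS(\cdot,\cdot,n')$ in which each node receives $m_r$ in its $\DOWNRECEIVE$ round and relays it in its $\DOWNSEND$ round; since a node at distance $i$ has $\DOWNRECEIVE$ round $i$ strictly before its $\DOWNSEND$ round $i+1$, the message propagates correctly down the tree in $O(1)$ awake and $O(n')$ rounds with $O(m_r)$-bit messages. Ranking uses two schedules: the first propagates subtree sizes upward (listening in $\UPRECEIVE$, sending the sum-plus-one in $\UPSEND$), delivering the exact size $n''$ to the root; the second sends $n''$ and an offset downward ($\DOWNRECEIVE$/$\DOWNSEND$), from which each node computes its rank by the stated recursion. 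Each schedule is $O(1)$ awake and $O(n')$ rounds, and since all transmitted quantities are bounded by $n' \le I$, messages are $O(\log I)$ bits.

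The step I expect to require the most care is the correctness of the ranking recursion, which I would establish by induction over the tree: if a node $v$ receives offset $x$ and has children $c_1,\ldots,c_k$ with subtree sizes $n_1,\ldots,n_k$, then passing $x$ to $c_1$, assigning $v$ rank $x+n_1$, and passing $1+\sum_{j<i} n_j$ to each $c_i$ assigns the ranks $x+1,\ldots,x+1+\sum_j n_j$ bijectively to the nodes of $v$'s subtree, matching the in-order-style traversal described after Definition~\ref{def:ldt-ops}. The second, subtler point is verifying that the named rounds of $\TS(\cdot,\cdot,n')$ enforce the correct direction of information flow so that no message is lost to a sleeping endpoint; this reduces to checking that a parent's $\DOWNSEND$ round $i+1$ coincides with its child's $\DOWNRECEIVE$ round $i+1$ (for the downward passes) and that a child's $\UPSEND$ round $2n'-i+1$ coincides with its parent's $\UPRECEIVE$ round (for the upward pass), with the round indices monotone along every root-to-leaf path in the intended direction.
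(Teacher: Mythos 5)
Your proposal is correct and takes essentially the same route as the paper, which ``proves'' this lemma simply by the preceding appendix material: run $\LDTCONSTRUCTROUND$ ($O(\log n')$ merge phases, each costing $O(1)$ calls to $\UPCASTMIN$, $\DOWNCAST$, $\TRANSMITADJACENT$ plus an $O(\log^*)$-iteration Cole--Vishkin simulation) and then $\LDTBROADCAST$ and $\LDTRANK$ (one, respectively two, transmission schedules, each $O(1)$ awake and $O(n')$ rounds), exactly as you do. Two minor remarks: your attribution of the $\log^*$ factor to the ID range (giving $\log^* I$) is in fact more careful than the paper's own appendix analysis, which loosely writes $\log^* n$; and the rank-offset formula you quote ($1+\sum_{j<i} n_j$ sent to $c_i$) inherits a typo from the paper's description --- the induction you propose would immediately show it must be $x+1+\sum_{j<i} n_j$ --- so this is a flaw in the paper's write-up, not in your approach.
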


\end{document}